\acrodef{6dma}[6DMA]{six-dimensional movable antenna}
\acrodef{6g}[6G]{six-generation}
\acrodef{mimo}[MIMO]{multi-input multi-output}
\acrodef{siso}[SISO]{single-input single-output} 
\newtheorem{proposition}{\bf{Proposition}}
\def\BibTeX{{\rm B\kern-.05em{\sc i\kern-.025em b}\kern-.08em
    T\kern-.1667em\lower.7ex\hbox{E}\kern-.125emX}}
\begin{document}
\title{6DMA-Aided Hybrid Beamforming with Joint Antenna Position and Orientation Optimization}
%\title{Hybrid Precoding for Multi-user Communications Aided by Movable Antenna}

\author{Yichi Zhang, Yuchen Zhang, Lipeng Zhu, Sa Xiao, Wanbin Tang, Yonina C. Eldar, \emph{Fellow, IEEE},\\ and Rui Zhang, \emph{Fellow, IEEE}
%\IEEEauthorblockA{National Key Laboratory of Science and Technology on Communications\\
%University of Electronic Science and Technology of China, Chengdu 611731, China\\
%Email: yc{\_}zhang@std.uestc.edu.cn}

\thanks{

Yichi Zhang, Yuchen Zhang, and Wanbin Tang are with the National Key Laboratory of Wireless Communications, University of Electronic Science and Technology of China, Chengdu 611731, China (e-mail: \{yczhang,yc\_zhang\}@std.uestc.edu.cn, wbtang@uestc.edu.cn).

Lipeng Zhu is with the Department of Electrical and Computer Engineering, National University of Singapore,
Singapore 117583 (e-mail: zhulp@nus.edu.sg).

Sa Xiao is with the National Key Laboratory of
Science and Technology on Communications, University of Electronic Science
and Technology of China, Chengdu 611731, China, and also with the Kash
Institute of Electronics and Information Industry, Kash 844000, China (e-mail: xiaosa@uestc.edu.cn).

Yonina C. Eldar is with the Faculty of Mathematics and Computer Science, Weizmann Institute of Science, Rehovot 7610001, Israel (e-mail: yonina.eldar@weizmann.ac.il).

 Rui Zhang is with School of Science and Engineering, Shenzhen Research Institute of Big Data, The Chinese University of Hong Kong, Shenzhen, Guangdong 518172, China (e-mail: rzhang@cuhk.edu.cn). He is also with the Department of Electrical and Computer Engineering, National University of Singapore, Singapore 117583 (e-mail: elezhang@nus.edu.sg).
}

}
%Jianquan Wang and Sa Xiao are with the National Key Laboratory of Wireless Communications, University of Electronic Science and Technology of China, Chengdu 611731, China, and also with the Science and Technology on Communications Networks Laboratory, China (e-mail: wjq2002wjq@126.com; xiaosa@uestc.edu.cn).
%
%X. Li is with the School of Mathematical Sciences, University of Electronic Science and Technology of China, Chengdu 611731 China (e-mail: lixiaoping.math@uestc.edu.cn).}

\maketitle

\begin{abstract}
This paper studies a sub-connected six-dimensional movable antenna (6DMA)-aided multi-user communication system. In this system, each sub-array is connected to a dedicated radio frequency chain and collectively moves and rotates as a unit within specific local regions. The movement and rotation capabilities of 6DMAs enhance design flexibility, facilitating the capture of spatial variations for improved communication performance. To fully characterize the effect of antenna position and orientation on wireless channels between the base station (BS) and users, we develop a field-response-based 6DMA channel model to account for the antenna radiation pattern and polarization. We then maximize the sum rate of multiple users, by jointly optimizing the digital and unit-modulus analog beamformers given the transmit power budget as well as the positions and orientations of sub-arrays within given movable and rotatable ranges at the BS. Due to the highly coupled variables, the formulated optimization problem is non-convex and thus challenging to solve. We develop a fractional programming-aided alternating optimization framework  that integrates the Lagrange multiplier method, manifold optimization, and gradient descent to solve the problem. Numerical results demonstrate that the proposed 6DMA-aided sub-connected structure achieves a substantial sum-rate improvement over various benchmark schemes with less flexibility in antenna movement and can even outperform fully-digital beamforming systems that employ antenna position or orientation adjustments only. The results also highlight the necessity of considering antenna polarization for optimally adjusting antenna orientation.
\end{abstract}
\begin{IEEEkeywords}
    Six-dimensional movable antenna (6DMA), directional radiation pattern, polarization, hybrid beamforming, multi-user communication.
\end{IEEEkeywords}

\section{Introduction}
As wireless applications continue to evolve to support a substantial number of users/devices, the scarcity of spectrum resources is becoming an increasingly critical challenge. The exploration of millimeter-wave and terahertz bands for larger bandwidth and increased data rate in upcoming \ac{6g} mobile systems is critical. To alleviate the considerable path loss experienced in high-frequency bands and further enhance spectral efficiency, large-scale antenna arrays are usually employed at the base stations (BSs). This enables the simultaneous multiplexing of multiple data streams within the same frequency band, while also facilitating multi-user communication with high beamforming gains by leveraging the abundant spatial degrees of freedom (DoFs). Unfortunately, traditional fully digital beamforming requires a dedicated radio frequency (RF) chain for each antenna, resulting in heavy power consumption and high hardware cost. 

To facilitate the deployment of large-scale antenna arrays with cost-effective hardware, hybrid beamforming has been introduced as a promising technique that only requires a limited number of RF chains. The data streams are first precoded using a low-dimensional digital beamformer (DBF), followed by the application of a high-dimensional analog beamformer (ABF) composed of low-cost phase shifters (PSs) \cite{HB1,HBF1,HBF2,HBF3,HBF4,HBF5,HBF6}. There are two main hybrid beamforming architectures depending on how the RF chains are connected to the PSs \cite{HBF1,HBF2}. In the fully-connected structure, each RF chain is connected to all antennas via separate PSs. However, the extensive deployment of PSs entails a high cost. The sub-connected structure divides large-scale antenna arrays into multiple sub-arrays by connecting each RF chain exclusively with a portion of antennas of each sub-array, resulting in lower hardware cost and design complexity, but at the cost of reduced design DoFs.

To improve the communication performance with a limited number of RF chains, the authors in \cite{HBF3,HBF6} demonstrated that for a fully-connected structure, a hybrid beamformer can achieve perfect approximation of the fully-digital beamformer when the number of RF chains is no less than twice the number of data streams. Consequently, the heuristic algorithm that alternatively optimizes the DBF and ABF to approximate the fully-digital beamformer has been considered an effective method and widely used in the majority of existing works on hybrid beamforming design \cite{HBF1,HBF2,HBF3,HBF6}. However, both existing fully-connected and sub-connected hybrid beamforming schemes rely on fixed antennas (FAs), thus limiting the communication performance due to the inability to fully exploit the spatial variation of wireless channels at the transmitters/receivers.

To further utilize the spatial DoFs, some regular/irregular array geometries were explored with the objective of yielding a channel with equal singular values, thereby enhancing spatial multiplexing performance \cite{Regular_array1,Irregular_array1}. The potential of mechanically rotating the uniform linear array was also investigated to achieve better channel conditions \cite{Rotate_array}. However, the aforementioned schemes still require specific array geometries, limiting their ability to fully exploit channel spatial variation. More recently, movable antenna (MA), fluid antenna system, and other related techniques, collectively encompassing the ability of reconfiguring wireless channels via mechanically/electronically moving the antennas, have been proposed as promising technologies for improving wireless communication performance \cite{MA_history,FAS1,MA_overview,MA_implement1,MA_implement2,MA_implement3,MA1}.

Most prior works investigated MA-aided wireless systems with flexible antenna positions \cite{MA1,MA2,MA3,MA_self}. The field-response based channel model was initially presented in \cite{MA1} for an MA-aided \ac{siso} communications, revealing its performance gains over the FA system with/without antenna selection under both deterministic and stochastic channel setups. Based on this channel model, the authors in \cite{MA2} characterized the capacity of an MA-aided \ac{mimo} system, which is equipped with the MAs at both the transmitter and receiver. It was demonstrated that the positions of the transmit and receive MAs can be jointly optimized to balance the channel gain for each of the spatial data streams, thus enhancing \ac{mimo} channel capacity. The MA was generalized to multi-user systems in \cite{MA3} for the uplink transmission. Two low-complexity algorithms based on zero-forcing and minimum mean square error combining methods were introduced, demonstrating the potential of significantly reducing the total transmit power in comparison to the FA system. To balance hardware cost and communication performance, the authors in \cite{MA_self} proposed an MA-aided hybrid beamforming scheme for downlink transmission. They designed an algorithm to optimize sum rate by jointly optimizing the DBF, the ABF, and the positions of subarrays. Simulations indicated that, under certain practical scenarios, the MA-aided sub-connected structure can outperform the FA-aided fully-connected structure. In addition, MA arrays have been applied to a variety of applications, including flexible beamforming, satellite communications, unmanned aerial vehicle (UAV) communications, wireless power transfer, spectrum sharing, over-the-air computation, and physical layer security \cite{MA4,MA_Flexible_Precoding,MA_Two_Time,MA_Flexible_BF,MA_satellite,MA_UAV,MA_spectrum,MA_over_the_air,MA_PLS}. These works demonstrated that MAs provide flexible design DoFs and significantly enhance the performance of various wireless systems.

In addition to the adjustment of antenna positions, a general architecture of  \ac{6dma} was proposed in \cite{6dma1}, which can adjust three-dimensional (3D) positions and 3D rotations of antennas for adapting to the dynamic user distribution in wireless networks. It was demonstrated that \ac{6dma} can significantly improve the wireless network capacity for scenarios with non-uniform user spatial distribution, such as those with higher user/device densities in aerial/terrestrial hot-spot areas. {{Considering practical issues associated with antenna movement, such as mechanical control accuracy, CSI acquisition timeliness, and the difficulty of obtaining perfect CSI for each user, the authors in \cite{6dma2} proposed both offline and online algorithms for jointly optimizing \acp{6dma}' positions and rotations. These algorithms adjust the antenna positions and rotations discretely in scenarios where the user distribution is either known or unknown {\it a priori}.}} The authors in \cite{MA_implement1,MA_implement2,MA_implement3,6dma3} introduced various implementation architectures and practical considerations for \ac{6dma}/flexible antennas, including hardware designs and potential applications. In addition, the authors in \cite{6dma4,6dma5,6dma_UAV} further investigated specific issues in \ac{6dma}-assisted wireless communication networks, including distributed channel estimation, hybrid architectures combining fixed and movable antennas, and UAV communications.

 Most previous works on \ac{6dma} \cite{MA_implement1,MA_implement2,MA_implement3,6dma1,6dma2,6dma3,6dma4,6dma5} assumed the fully-digital structure, where each antenna is connected to a dedicated RF chain based on the fully digital beamforming structure. Additionally, they neglected the impact of rotation on antenna polarization, which can significantly affect the rate performance between the BS and users in practice. In light of this, a polarization-aware channel model was introduced for MA in \cite{6dma_polar}, which, however, this work considered the simple line-of-sight (LoS) channel only. Therefore, \ac{6dma} channels considering antenna polarization and cost-effective array structures have not been explored yet in the literature, thus motivating our current work.

In this paper, we propose a hybrid beamforming scheme based on sub-connected \ac{6dma} arrays. Specifically, rather than moving/rotating each antenna independently, the proposed 6DMA array employs a collective driving mechanism. In this reconfiguration, a dedicated motor drives each sub-array, allowing the collectively movements and rotations of all antennas within the sub-array, thereby reducing the hardware cost significantly. Besides considering the directional antenna radiation patterns, antenna polarization is incorporated in \ac{6dma} channel model, fully capturing the effects of adjusting antenna orientation and polarization jointly. The main contributions of this paper are summarized as follows:

\begin{itemize}
    \item [$\bullet$] Following the implementation of directional polarized antennas at BSs in existing wireless networks \cite{3GPP}, we extend the field-response channel model to the general case by considering the effects of multi-path superposition, directional antenna radiation pattern and polarization, aiming to accurately characterize the communication performance of \ac{6dma} systems under practical setups.
    \item [$\bullet$] Leveraging the generalized \ac{6dma} channel model, we jointly optimize the hybrid beamformer together with the positions and orientations of sub-arrays to maximize the users' sum rate. To tackle the resulting non-convex optimization problem, we introduce a fractional programming (FP)-aided alternating optimization (AO) framework. This approach combines the Lagrange multiplier method, manifold optimization (MO), and gradient descent techniques to efficiently solve the formulated problem sub-optimally.
    \item [$\bullet$] Numerical results demonstrate that our proposed \ac{6dma}-aided sub-connected structure outperforms various benchmark schemes with less flexibility in antenna movement. Moreover, under certain practical conditions, the proposed scheme with sub-connected \ac{6dma} arrays can even outperform its fully-digital counterparts with less antenna movement flexibility. {{In addition, the baseline scheme that optimizes the DBF, ABF, and the orientations and positions of subarrays based on the unpolarized channel model shows significant performance loss, highlighting the importance and necessity of considering the polarized channel model.}}
\end{itemize}

The rest of this paper is organized as follows. Section II describes the system model and formulates the problem. Section III proposes the algorithm to solve the formulated problem by an FP-aided AO frameworks. Section IV presents numerical results and pertinent discussions. Finally, Section V concludes the paper.

\emph{Notations:} Vectors (lower case) and matrices (upper case) are denoted in boldface. We use $\left(\cdot\right)^T$, $\left(\cdot\right)^{\dag}$, and $\left(\cdot\right)^H$ to denote the transpose, conjugate, and conjugate transpose, respectively, $\mathbb{C}^{M\times N}$ and $\mathbb{R}^{M\times N}$ are the set of $M\times N$ dimensional complex and real matrices, and $\mathbb{E}(\cdot)$ is the mathematical expectation operator. We use ${\bf{I}}_N$ to denote the $N\times N$ identity matrix, ${\bf{1}}_N$ is the $N$-dimensional column  vector with all one elements, ${\bf{0}}_{M\times N}$ is the $M\times N$ matrix with all zero elements, ${\bf{e}}_n\in\mathbb{R}^{N\times1}$ is the $N$-dimensional column vector where the $n$-th element is $1$ and all others are $0$, ${\bf{A}}={\rm{blkdiag}}[{\bf{a}}_1,{\bf{a}}_2,\cdots,{\bf{a}}_N]$ denotes a block diagonal matrix, ${\rm{diag}}({\bf{a}})$ is a diagonal matrix with diagonal elements being the vector ${\bf{a}}$. The notation $[a,b]\times[c,d]\times[e,f]$ represents the Cartesian product of three intervals, constituting a set of a three-dimensional vectors $(x,y,z)^T$, where $x\in[a,b]$, $y\in[c,d]$, and $z\in[e,f]$, $[f(n)]_{1\leq n\leq N}$ denotes a vector with $n$-th element being $f(n)$, where $f(n)$ is a function with respect to (w.r.t.) $n$, $[f(m,n)]_{1\leq n\leq N,1\leq m\leq M}$ is a matrix with elements at $m$-th row and $n$-th column being $f(m,n)$, where $f(m,n)$ is a function w.r.t. $m$ and $n$, $[{\bf{a}}]_n$ denotes the $n$-th element of vector ${\bf{a}}$, $||{\bf{a}}||_2$ and $||{\bf{A}}||_F$ are the $2$-norm of vector ${\bf{a}}$ and the Frobenius norm of matrix ${\bf{A}}$, respectively. We use $\rm{arcsin}(\cdot)$ to denote the arcsine function, $|b|$ and $\angle(b)$ are amplitude and phase of the complex number $b$, $\otimes$ and $\odot$ denote the Kronecker and Hadamard products, and ${\rm{Re}}\{b\}$ denotes the real part of $b$.

\section{System Model}
In this section, we first describe the \ac{6dma}-aided downlink multi-user multiple-input single-output (MU-MISO) system. Then, we extend the field-response based channel model by incorporating the effects of directional radiation pattern and polarization of antennas \cite{MA1,MA3,6dma1,6dma2,6dma3,PC_Liu,PC_Heath1,PC_Heath2,3GPP}.

\subsection{Description of the 6DMA-aided System}
We consider \ac{6dma}-aided MU-MISO communication system, consisting of a multi-antenna BS serving $K$ single-antenna users, as illustrated in Fig. \ref{system model}. The BS employs a sub-connected hybrid beamforming structure, which consists of $N$ movable uniform planar arrays (UPAs). In this structure, each movable UPA consists of $M$ antennas. The antennas within each UPA are first individually connected to a separate PS and then to a dedicated RF chain, and can move and rotate collectively in a given region. Specifically, the central point of each UPA  is constrained to move  within a specified 2D region, serving as a supporting surface. To mitigate the effects of mutual coupling between antennas, the movable regions of the UPAs are designed with sufficient separation, ensuring that the minimum distance between any two antennas is at least half of the carrier wavelength. On the other hand, to mitigate the adverse effects of mutual signal reflection, each UPA is constrained to rotate (around its central point) within a limited angular range \cite{6dma1}.

\begin{figure}
    \centering
    \includegraphics[width=0.45\textwidth]{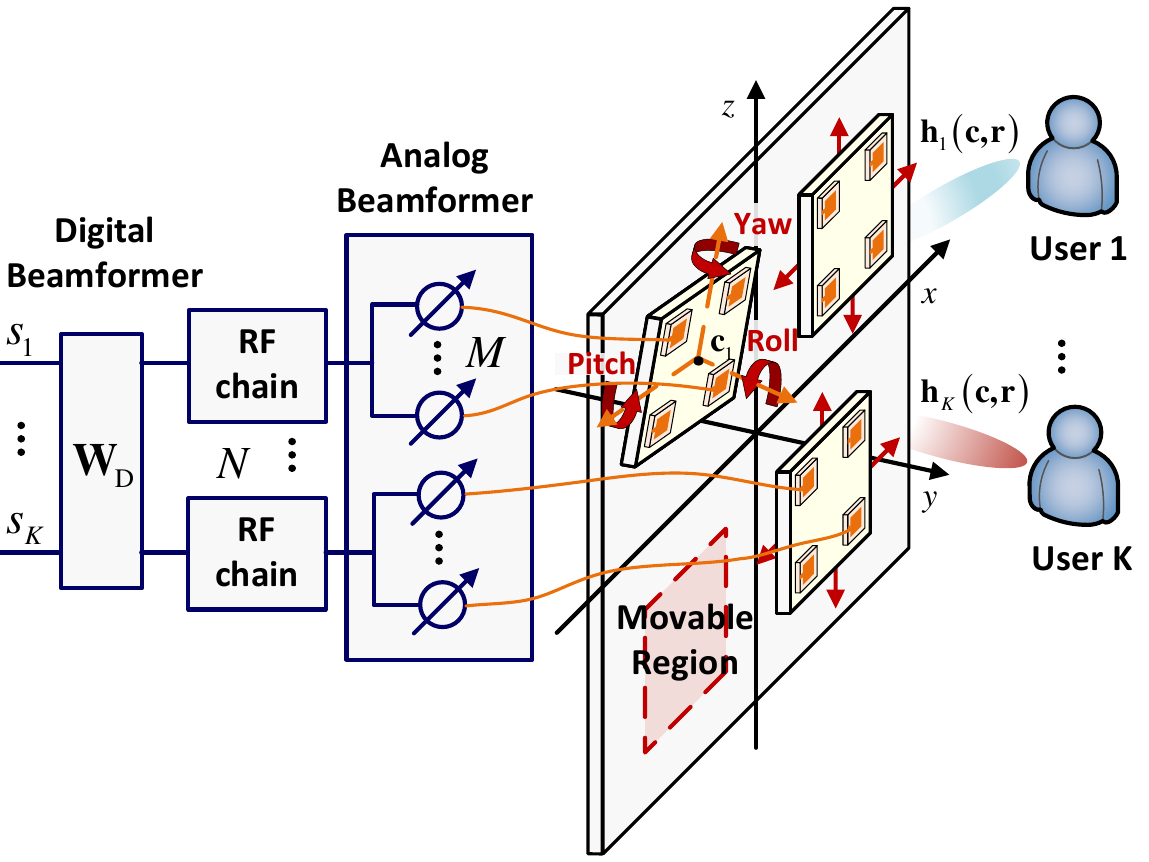}
    \caption{The proposed multi-user communication system aided by \acp{6dma} with hybrid beamforming.}
    \label{system model}
    % \vspace{-0.6cm}
\end{figure}

The central position and orientation of the $n$-th UPA are denoted as ${\bf{c}}_n=(x_n,y_n,z_n)^T\in\mathcal{C}_n$ and ${{\bf{r}}_n=(\alpha_n,\beta_n,\gamma_n)^T\in\mathcal{R}}$, respectively, where $\mathcal{C}_n$ denotes the given movable planar region enclosed by the red-dotted line, and $\mathcal{R}$ represents the specified rotatable range. Here, $x_n$,  $y_n$, and $z_n$ represent the coordinates of the center of the $n$-th UPA along the $x$-axis, $y$-axis, and $z$-axis, respectively, within the global Cartesian coordinate system (GCS). In addition, $\alpha_n$, $\beta_n$, and $\gamma_n$ denote the pitch, roll, and yaw angles of the rotation of the $n$-th UPA around the $x$-axis, $y$-axis, and $z$-axis, respectively. The orange dashed axes in Fig. \ref{system model} indicate the local coordinate system (LCS) with arbitrary orientation associated with the $n$-th UPA. The coordinates along the $x$-axis, $y$-axis, and $z$-axis in the LCS of the $n$-th UPA are denoted by $\tilde{x}_n$, $\tilde{y}_n$, and $\tilde{z}_n$ axes, respectively.

% % \vspace{-0.5cm}
% \begin{equation}
% {\bf{c}}_n=\left(x_n,y_n,z_n\right)^T\in\mathcal{C}_n,
% \end{equation}
% % \vspace{-0.6cm}
% \begin{equation}
%     {{\bf{r}}_n=\left(\alpha_n,\beta_n,\gamma_n\right)^T\in\mathcal{R}},
% \end{equation} 

We proceed to express the exact coordinate of each antenna within the $n$-th UPA. Let ${\bf{R}}_x$, ${\bf{R}}_y$, and ${\bf{R}}_z$ denote the coordinate transformation induced by rotation around the $x$-axis, $y$-axis, and $z$-axis, respectively. Specifically, the transformation from the LCS to the GCS is expressed by the rotation matrix as \cite{6dma1}

\vspace{-0.5cm}
\begin{equation}
\begin{aligned}
    &{\bf{R}}_n = {\bf{R}}_z\left(\gamma_n\right){\bf{R}}_y\left(\beta_n\right){\bf{R}}_x\left(\alpha_n\right)\\
    &= \begin{bmatrix}
        c_{\gamma_n} & -s_{\gamma_n} & 0\\
        s_{\gamma_n} & c_{\gamma_n} & 0\\
        0 & 0 & 1
    \end{bmatrix}
    \begin{bmatrix}
        c_{\beta_n} & 0 & s_{\beta_n}\\
        0 & 1 & 0\\
        -s_{\beta_n} & 0 & c_{\beta_n} 
    \end{bmatrix}
    \begin{bmatrix}
        1 & 0 & 0\\
        0 & c_{\alpha_n} & -s_{\alpha_n}\\
        0 & s_{\alpha_n} & c_{\alpha_n}
    \end{bmatrix}\\
    &= \begin{bmatrix}
        c_{\beta_n}c_{\gamma_n} & s_{\alpha_n}s_{\beta_n}c_{\gamma_n}-c_{\alpha_n}s_{\gamma_n} & c_{\alpha_n}s_{\beta_n}c_{\gamma_n}+s_{\alpha_n}s_{\gamma_n} \\
        c_{\beta_n}s_{\gamma_n} & s_{\alpha_n}s_{\beta_n}s_{\gamma_n}+c_{\alpha_n}c_{\gamma_n} & c_{\alpha_n}s_{\beta_n}s_{\gamma_n}-s_{\alpha_n}c_{\gamma_n} \\
         -s_{\beta_n} & s_{\alpha_n}c_{\beta_n} & c_{\alpha_n}c_{\beta_n}
    \end{bmatrix},
\end{aligned}
\end{equation}
where $c_x=\cos x$ and $s_x=\sin x$ for concise. As shown in Fig. \ref{system model}, the central point of each UPA serves as the original point of the LCS of this UPA. For each UPA, the relative positions of all antennas within it are defined as the coordinate in the UPA’s LCS and are collected in a predefined matrix ${\bm{\delta}}=[{\bm{\delta}}_1,{\bm{\delta}}_2,\cdots,{\bm{\delta}}_M]\in\mathbb{R}^{3\times M}$. The position of the $m$-th antenna within the $n$-th UPA in the GCS is given by

\vspace{-0.4cm}
\begin{equation}
    {\bf{t}}_{m,n} = {\bf{R}}_n{\bm{\delta}}_m+{\bf{c}}_n.
\end{equation}
 The MU-MISO channel gain between the transmitter and the receiver is generally a function of ${\bf{c}}_n$ and ${\bf{r}}_n$. We collect the central positions and orientations of all UPAs in 
${\bf{c}}=[{\bf{c}}_1,{\bf{c}}_2,\cdots,{\bf{c}}_N]$ and ${\bf{r}}=[{\bf{r}}_1,{\bf{r}}_2,\cdots,{\bf{r}}_N]$, respectively. The channel gain from the BS to the $k$-th user is denoted as ${\bf{h}}_k({\bf{c}},{\bf{r}})\in\mathbb{C}^{MN\times 1}$.

Let ${\bf{s}}\in \mathbb{C}^K$ be the transmit signal with covariance matrix ${\bf{I}}_K$. The $k$-th user's received signal is then given by
\vspace{-0.2cm}
\begin{equation}\label{received_signal}
\begin{aligned}
y_k&={\bf{h}}_{k}\left({\bf{c}},{\bf{r}}\right)^H{\bf{W}}_{\rm{A}}{\bf{W}}_{\rm{D}}{\bf{s}}+n_k,\\
&=\underbrace{{\bf{h}}_{k}\left({\bf{c}},{\bf{r}}\right)^H{\bf{W}}_{\rm{A}}{{\bf{w}}_{k}s_{k}}}_{\rm{Desired ~signal}}+\underbrace{{\bf{h}}_{k}\left({\bf{c}},{\bf{r}}\right)^H{\bf{W}}_{\rm{A}}\sum_{k'\neq k}{{\bf{w}}_{k'}s_{k'}}}_{\rm{Interference}}+n_k,
\end{aligned}
\end{equation}
where ${\bf{W}}_{\rm{D}}=[{\bf{w}}_1,{\bf{w}}_2,\cdots,{\bf{w}}_{K}]\in\mathbb{C}^{N\times K}$ denotes the DBF, ${\bf{W}}_{\rm{A}}\in\mathbb{C}^{MN\times N}$ denotes the ABF, and $n_k\sim\mathcal{CN}(0,\sigma_k^2)$. The sub-connected structure introduces the unit-modulus constraint to ABF \cite{HB1}, i.e.,

\begin{equation}\label{unit_modulus_constraint}
    {\bf{W}}_{\rm{A}}={\rm{blkdiag}}\left({\bf{b}}_{1},{\bf{b}}_{2},\cdots,{\bf{b}}_{N}\right).
\end{equation}
Here, ${{\bf{b}}}_{n}=\left[e^{j\psi_{m,n}}\right]^T_{1\leq m\leq M}$ represents the phase shift caused by PSs connected to the $n$-th RF chain and the $m$-th antenna, where $\psi_{m,n}\in[0,2\pi)$. Next, we derive the specific expression of channels between the BS and all users.

\subsection{Channel Model}

We consider far-field, quasi-static, and slow-fading channels \cite{MA1}. For the \ac{6dma}-aided sub-connected structure, the channel response is determined not only by the propagation environment, but also by the positions and orientations of all UPAs. Given that the size of the movable region for each UPA is considerably smaller than the signal propagation distance between the BS and the user, the far-field assumption is considered. In this scenario, the angle of departure (AoD), the angle of arrival (AoA), and the amplitude of the complex path coefficient remain unchanged during the movement of the UPAs, while only the phases vary \cite{MA1,MA2,MA3}. Furthermore, most existing studies assume that antenna elements exhibit an omnidirectional radiation pattern, overlooking spatial variations in the radiation power and the impact of polarization due to differing antenna orientations \cite{MA1,MA2,MA3,MA_self,6dma1,6dma2,6dma3}. Motivated by practical requirement, we propose a comprehensive model that incorporates the multi-path propagation, directional radiation pattern, and the effects of polarization matching. Specifically, during the rotation of the UPAs, the variations in AoD (in its LCS) result in changes in both radiation power and polarization losses, which will be detailed later. 

\subsubsection{Basic Field-Response Model}
Denote the total number of transmit and receive channel paths from the BS to the $k$-th user as $L_k^t$ and $L_k^r$, respectively. Due to the unchanged field-response vector (FRV) of the single receive FA, the paths in the rest of paper refer to the transmit channel paths. As shown in Fig. \ref{polarization model} (a), in the spherical coordinate system, the unit vector along the propagation direction is given by ${\bm{\rho}}_{k,l}=[\cos\theta_{k,l}\cos\phi_{k,l},\cos\theta_{k,l}\sin\phi_{k,l},\sin\theta_{k,l}]^T$, where $\theta_{k,l}\in[-\frac{\pi}{2},\frac{\pi}{2}]$ and $\phi_{k,l}\in[0,\pi]$ represent the elevation and azimuth angles of the $l$-th path for the $k$-th user, respectively. According to the basic geometry, the difference of the signal propagation distance between position ${\bf{t}}_{m,n}$ and reference point $[0,0,0]^T$ is given by $({\bf{t}}_{m,n})^T{\bm{\rho}}_{k,l}$. The FRV of the $m$-th antenna within $n$-th UPA for the $k$-th user can be defined as \cite{MA1,MA2,MA3}

\vspace{-0.4cm}
\begin{equation}
{\bf{g}}_k\left({\bf{t}}_{m,n}\right)=\left[e^{j\frac{2\pi}{\lambda}\left({\bf{t}}_{m,n}\right)^T{\bm{\rho}}_{k,l}}\right]^T_{1\leq l\leq L_k^t},
\end{equation}
where $\lambda$ denotes the carrier wavelength. Define the path-response matrix (PRM) ${\bm{\Sigma}}_k\in\mathbb{C}^{L_k^t\times L_k^r}$ as the response from the origin of the transmit region to that of the $k$-th user, incorporating large-scale fading effects such as path loss, and small-scale fading effects such as the random phase of each path caused by reflection, diffraction, or scattering. Specifically, the entry at the $i$-th row and $j$-th column of ${\bm{\Sigma}}_k$, denoted as $\sigma_k^{(i,j)}$, is the response coefficient between the $j$-th transmit path and the $i$-th receive path. By stacking FRVs of all \acp{6dma} as ${\bf{G}}_k({\bf{t}})=[{\bf{g}}_k({\bf{t}}_{1,1}),{\bf{g}}_k({\bf{t}}_{1,2}),\cdots,{\bf{g}}_k({\bf{t}}_{M,N})]\in\mathbb{C}^{L_k^t\times MN}$, the channel gain between the BS and the $k$-th user is given by

\begin{equation}\label{array model}
    {\bf{h}}_k\left({\bf{c}},{\bf{r}}\right)={\bf{G}}_k\left({\bf{t}}\right)^H{\bm{\Sigma}}_k{\bf{1}}_{L_k^r}.
\end{equation}

\subsubsection{Directional Radiation Pattern}
Instead of considering the omnidirection antenna with a constant pattern radiation power, we derive the received power gain based on antennas with direction radiation pattern, which are widely adopted in practice \cite{AP1,cosine_pattern1,cosine_pattern2,3GPP}. We consider the commonly used cosine radiation pattern, where the radiation power is expressed as \cite{MA_implement2,cosine_pattern1,cosine_pattern2}

\begin{figure*}[t!]
    \centering
    \includegraphics[width=0.8\textwidth]{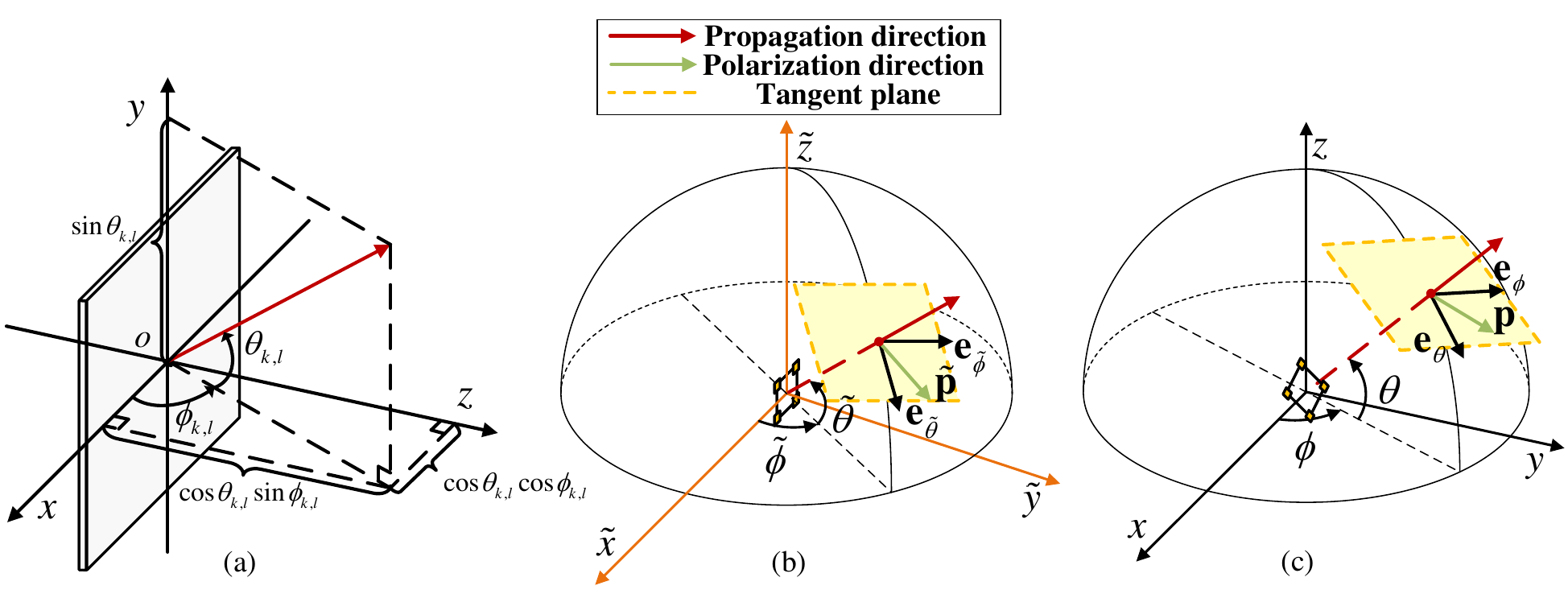}
    \caption{Illustrations of the spherical coordinate system: (a) The transformation from the Cartesian to the spherical coordinates; (b) LCS of an UPA; (c) GCS.}
    \label{polarization model}
\end{figure*}

\begin{equation}\label{antenna aperture}
    G_{E}\left(\tilde{\theta}_{k,l,n},\tilde{\phi}_{k,l,n}\right) = G_{\kappa}\left|\cos^{\kappa}\left(\tilde{\theta}_{k,l,n}\right)\right|.
\end{equation}
Here, $\tilde{\theta}_{k,l,n}\in[-\frac{\pi}{2},\frac{\pi}{2})$ and $\tilde{\phi}_{k,l,n}\in[0,2\pi]$ denote the elevation and azimuth angles of the $l$-th path for the $k$-th user in the $n$-th UPA's LCS, respectively. $\kappa> 0$ characterizes the energy focusing capability of the antenna, and $G_{\kappa}$ is a normalization factor such that $\int_{\Omega}\frac{1}{4\pi}G_{E}(\tilde{\theta}_{k,l,n},\tilde{\phi}_{k,l,n}){\rm{d}}\Omega=1$, with $\Omega$ representing the unit spherical space. For convenience, we set $\kappa=2$, $G_{\kappa}=\frac{3}{2}$, and define the pattern coefficient $A_E(\tilde{\theta}_{k,l,n})=\sqrt{G_{E}(\tilde{\theta}_{k,l,n},\tilde{\phi}_{k,l,n})}$ \cite{cosine_pattern1,cosine_pattern2}. Additionally, according to geometrical transformation, $(\tilde{\theta}_{k,l,n},\tilde{\phi}_{k,l,n})$ can be expressed as \cite{3GPP}

\vspace{-0.4cm}
\begin{subequations}
    \begin{align}
        \tilde{\theta}_{k,l,n} &= \arcsin\left(
        [0,0,1]{\bf{R}}^{-1}_n{\bm{\rho}}_{k,l}\right),\label{theta_LCS}\\
        \tilde{\phi}_{k,l,n} &= \angle\left(
        [1,j,0]{\bf{R}}^{-1}_n{\bm{\rho}}_{k,l}\right)\label{phi_LCS}.
    \end{align}
\end{subequations}

\subsubsection{Antenna Polarization}
 The radiated electromagnetic (EM) waves are composed of oscillating electric and magnetic fields, which are orthogonal to the propagation direction of waves. The polarization direction of a radiated EM wave typically denotes the oscillation direction of the electric field \cite{cosine_pattern1}. Previous works that utilize a 2D propagation model and describe the polarization directions of EM waves as purely vertical and horizontal \cite{PC_Heath1,PC_Heath2}. In contrast, our scenario accounts for the elevation spread, which cannot be neglected. Therefore, we extend the polarization model to three dimensions. It should be emphasized that the polarization direction of an EM wave is determined by the orientation of the antenna, which is reflected in the differences in elevation and azimuth angles between the LCS and GCS. Different radiation patterns may result in different polarization directions \cite{cosine_pattern1}. To incorporate the antenna polarization effects of all \acp{6dma}, the polarization direction of the EM waves for each \ac{6dma} needs to be transformed from its LCS to the GCS. 

As shown in Fig. \ref{polarization model}, the polarization direction of the EM wave from the $n$-th UPA along the $l$-th path to the $k$-th user lies in the tangent plane, with the propagation direction as its normal vector. The unit vector ${\bm{\rho}}_{k,l}$ along the propagation direction can also indicate the spherical coordinate of a point. In the $n$-th UPA's LCS, the other two spherical unit vectors, which represent the transformation of the spherical coordinate w.r.t $\tilde{\theta}_{k,l,n}$ and $\tilde{\phi}_{k,l,n}$, respectively, are given by \cite{3GPP}

\vspace{-0.4cm}
\begin{subequations}\label{spherical unit vector}
\begin{align}
    {\bf{e}}_{\tilde{\theta}_{k,l,n}} &= \frac{\partial\tilde{\bm{\rho}}_{k,l}}{\partial\tilde{\theta}_{k,l,n}} =  \begin{bmatrix}
        \sin\tilde{\theta}_{k,l,n}\cos\tilde{\phi}_{k,l,n}\\\sin\tilde{\theta}_{k,l,n}\sin\tilde{\phi}_{k,l,n}\\-\cos\tilde{\theta}_{k,l,n}
    \end{bmatrix},\\
    {\bf{e}}_{\tilde{\phi}_{k,l,n}}&= \frac{\partial\tilde{\bm{\rho}}_{k,l}}{\partial\tilde{\phi}_{k,l,n}} =  \begin{bmatrix}
        -\sin\tilde{\phi}_{k,l,n}\\\cos\tilde{\phi}_{k,l,n}\\0
    \end{bmatrix}.
    \end{align}
\end{subequations} The polarization direction $\tilde{\bf{p}}_{k,l,n}^t$ can be decomposed into a linear combination of the two spherical unit vectors within the tangent plane as

\begin{equation}\label{polarization vector LCS}
    \tilde{\bf{p}}_{k,l,n}^t=F_{\tilde{\theta}_{k,l,n}}{\bf{e}}_{\tilde{\theta}_{k,l,n}}+F_{\tilde{\phi}_{k,l,n}}{\bf{e}}_{\tilde{\phi}_{k,l,n}},
\end{equation} 
 where $F_{\tilde{\theta}_{k,l,n}}$ and $F_{\tilde{\phi}_{k,l,n}}$ denote the polarized field components of the $n$-th UPA along the two directions ${\bf{e}}_{\tilde{\theta}_{k,l,n}}$ and ${\bf{e}}_{\tilde{\phi}_{k,l,n}}$, respectively. For simplicity, we assume vertically polarized antennas, leading to $F_{\tilde{{\theta}}_{k,l,n}}=1$ and $F_{\tilde{{\phi}}_{k,l,n}}=0$ \cite{Polar_Jsac,3GPP}. 
 
 The expression of the polarization direction in \eqref{polarization vector LCS} is based on the elevation and azimuth angles in each UPA's LCS, complicating the process of incorporating the antenna polarization effects of all \acp{6dma} into the channel model. To address this, the polarization direction should be transformed into the GCS. As shown in Fig. \ref{polarization model} (c), in the GCS, the polarization direction can be similarly expressed as a linear combination of the other two spherical unit vectors ${\bf{e}}_{{\theta}_{k,l}}=\frac{\partial{\bm{\rho}}_{k,l}}{\partial\theta_{k,l}}$ and ${\bf{e}}_{{\phi}_{k,l}}=\frac{\partial{\bm{\rho}}_{k,l}}{\partial\phi_{k,l}}$. The two vectors follow the definitions in \eqref{spherical unit vector} with the elevation and azimuth angles defined in the GCS. Let $F_{{\theta}_{k,l,n}}$ and $F_{{\phi}_{k,l,n}}$ denote the polarized field components of the $n$-th UPA along the directions ${\bf{e}}_{{\theta}_{k,l}}$ and ${\bf{e}}_{{\phi}_{k,l}}$, respectively. The relation between the two pairs of electric field components is given by \cite{3GPP}

\begin{equation}
    \begin{bmatrix}
        F_{{\theta}_{k,l,n}}\\F_{{\phi}_{k,l,n}}
    \end{bmatrix} = \begin{bmatrix}
        {\bf{e}}_{{\theta}_{k,l}}^T{\bf{R}}_n{\bf{e}}_{\tilde{{{\theta}}}_{k,l,n}} & {\bf{e}}_{{\theta}_{k,l}}^T{\bf{R}}_n{\bf{e}}_{\tilde{{{\phi}}}_{k,l,n}}\\
        {\bf{e}}_{{\phi}_{k,l}}^T{\bf{R}}_n{\bf{e}}_{\tilde{{{\theta}}}_{k,l,n}} & {\bf{e}}_{{\phi}_{k,l}}^T{\bf{R}}_n{\bf{e}}_{\tilde{{{\phi}}}_{k,l,n}}
    \end{bmatrix}\begin{bmatrix}
        F_{\tilde{{\theta}}_{k,l,n}}\\F_{\tilde{{\phi}}_{k,l,n}}
    \end{bmatrix}.
\end{equation}
Then, the polarization direction in the GCS is given by 
\begin{equation}\label{polarization vector}
\begin{aligned}
    {\bf{p}}_{k,l,n}^t&=F_{{\theta}_{k,l,n}}{\bf{e}}_{{\theta}_{k,l}}+F_{{\phi}_{k,l,n}}{\bf{e}}_{{\phi}_{k,l}}\\
    &= \left({\bf{e}}_{{\theta}_{k,l}}{\bf{e}}_{{\theta}_{k,l}}^T+{\bf{e}}_{{\phi}_{k,l}}{\bf{e}}_{{\phi}_{k,l}}^T\right){\bf{R}}_n{\bf{e}}_{\tilde{{{\theta}}}_{k,l,n}}.
\end{aligned}
\end{equation} 
Notably, the antenna polarization effects of receive antennas should also be considered to characterize the transmit-receive polarization mismatch, which affects the power received by an antenna. However, due to the fixed orientation at receive antennas, the antenna polarization of receive antennas remains unchanged. Without loss of generality, we omit the specific expressions for cross-polarization discrimination and the polarized field components at the receiver. Instead, we define the equivalent receive polarization direction for the $l$-th path associated with the $k$-th user in the GCS as ${\bf{p}}_{k,l}^r$. The channel gain accounting for polarization mismatch is given by

\begin{equation}\label{polarization matching}
    A_{P}\left(\tilde{\theta}_{k,l,n},\tilde{\phi}_{k,l,n}\right) = \left({\bf{p}}_{k,l,n}^t\right)^T{\bf{p}}_{k,l}^r.
\end{equation}
Define ${\bf{A}}_{k}({\bf{r}}) = [A_{E}(\tilde{\theta}_{k,l,n})A_{P}(\tilde{\theta}_{k,l,n},\tilde{\phi}_{k,l,n})]_{1\leq n\leq N,1\leq l\leq L_k^t}\in\mathbb{R}^{N\times L_K^t}$. The resulting channel vector between the BS and the $k$-th user is given by

\begin{equation}\label{general channel model}
    {\bf{h}}_{k}\left({\bf{c}},{\bf{r}}\right) = \bigg({\bf{A}}_{k}\left({\bf{r}}\right)\otimes{\bf{1}}_M^T
    \bigg)\odot{\bf{G}}_k\left({\bf{t}}\right)^H{\bm{\Sigma}}_k{\bf{1}}_{L_k^r}.
\end{equation}
 Based on the derived channel model and $\eqref{received_signal}$, the achievable rate of the $k$-th user in bits-per-second-per-Hz (bps/Hz) is given by

\begin{equation}\label{rate}
R_k=\log_2\left(1+\frac{\left|{\bf{h}}_{k}\left({\bf{c}},{\bf{r}}\right)^H{\bf{W}}_{\rm{A}}{\bf{w}}_k\right|^2}{\sum_{k'\neq k}\left|{\bf{h}}_{k}\left({\bf{c}},{\bf{r}}\right)^H{\bf{W}}_{\rm{A}}{\bf{w}}_{k'}\right|^2+\sigma_k^2}\right).
\end{equation}

\subsection{Problem Formulation}
To maximize the sum rate of all users, we jointly optimize the central positions ${\bf{c}}$ and orientations ${\bf{r}}$ of all UPAs, the ABF ${\bf{W}}_{\rm{A}}$, and the DBF ${\bf{W}}_{\rm{D}}$, as formulated in

\begin{subequations}\label{optimization_problem}
\begin{align}
    \max_{{\bf{c}},{\bf{r}},{\bf{W}}_{\rm{A}},{\bf{W}}_{\rm{D}}}&\sum_{k=1}^KR_k\\
    {\rm{s.t.}}~~~~&{\bf{c}}_{n}\in\mathcal{C}_{n},~{\bf{r}}_{n}\in\mathcal{R},~\forall n,\label{position_constraint}\\
    &\left|\left|{\bf{W}}_{{\rm{A}}}{\bf{W}}_{{\rm{D}}}\right|\right|_F^2\leq P,\label{power_budget}\\
    &\left|\left[{\bf{b}}_{n}\right]_m\right|=1,~\forall m,n\label{unit_modulus},
\end{align}
\end{subequations}
where $P$ represents the power budget, ${\mathcal{C}}_n=[x_n^L,x_n^U]\times[y_n^L,y_n^U]\times[z_n^L,z_n^U]$, and $\mathcal{R}=[\alpha^L,\alpha^U]\times[\beta^L,\beta^U]\times[\gamma^L,\gamma^U]$. Here, the variables with superscripts $L$ and $U$ denote the lower and upper boundary constraints for the set of variables, respectively. Unlike the previous FA-based hybrid beamforming designs \cite{HBF1,HBF2,HBF3}, the movable positions and rotatable orientations of UPAs are highly coupled in the objective function. {{Furthermore, the prior studies that optimize the positions and orientations of \acp{6dma} using an LoS channel model with a fully-digital architecture and ignore polarization effects for antenna rotation \cite{6dma1,6dma2}. To make the scenario more practical, we reconstruct a channel model that accounts for the multi-path propagation environment, directional radiation pattern, and antenna polarization. We then consider the hybrid beamforming structure to reduce hardware cost. These emerging complexities introduce a non-concave objective function, non-convex constraints, and highly coupled variables, making \eqref{optimization_problem} challenging to solve optimally. Thus, new algorithms are required to solve problem \eqref{optimization_problem} efficiently.}}

\section{Algorithm Design}

This section introduces an FP-aided AO algorithm to solve $\eqref{optimization_problem}$ sub-optimally \cite{FP}. The FP framework first reformulates $\eqref{optimization_problem}$ into a more tractable equivalent form. Inspired by AO, the resulting FP problem is then decomposed into
four subproblems to iteratively update the DBF, the ABF,
the central positions and orientations of UPAs. In particular, we introduce a direct method that maintains a consistent objective function throughout the entire optimization workflow, different from the indirect methods in existing literature \cite{HBF1, HBF2, HBF3} that derive the DBFs and ABFs through approximation of the fully-digital beamformer. This guarantees the convergence of the FP-aided AO algorithm. Specifically, the closed-form DBF is drived using the Lagrange multiplier method. In addition, the MO is considered to handle the unit-modulus constraint in ABF optimization. A gradient descent method is also adopted to sequentially search the positions and orientations of UPAs.

\subsection{Problem Reformulation Based on FP}
To simplify the objective function, we adopt the FP framework presented in \cite{FP}. Let $\omega_k={\bf{h}}_{k}\left({\bf{c}},{\bf{r}}\right)^H{\bf{W}}_{\rm{A}}{\bf{w}}_k$ and $\nu_k=\sigma_k^2+\sum_{k'=1}^K\left|{\bf{h}}_{k}\left({\bf{c}},{\bf{r}}\right)^H{\bf{W}}_{\rm{A}}{\bf{w}}_{k'}\right|^2$, the equivalent form of problem \eqref{optimization_problem} is formulated as

\begin{subequations}\label{FP}
\begin{align}
\max_{{\bf{W}}_{\rm{D}},{\bf{W}}_{\rm{A}},{\bf{c}},{\bf{r}},{\bf{u}},{\bf{v}}}&\mathcal{L}=\sum_{k=1}^K\left(1+u_k\right)\left(2{\rm{Re}}\left\{v_k^{\dag}\omega_k\right\}-\left|v_k\right|^2\nu_k\right)\notag\\
&\quad~~+\log\left(1+u_k\right)-u_k\\
{\rm {s.t.}}~~&u_k>0,v_k\in\mathbb{C},\forall k,\\
&\eqref{position_constraint}, \eqref{power_budget}, \eqref{unit_modulus},
\end{align}
\end{subequations}
\noindent where ${\bf{u}}=[u_1,u_2,\cdots,u_K]$ and ${\bf{v}}=[v_1,v_2,\cdots,v_K]$ are the slack variables.

To decouple the variables $\{{\bf{W}}_{\rm{D}},{\bf{W}}_{\rm{A}},{\bf{c}},{\bf{r}},{\bf{u}},{\bf{v}}\}$ in $\eqref{FP}$, we employ the AO framework to update each variable iteratively with other variables being fixed. Given ${\bf{W}}_{\rm{D}}$, ${\bf{W}}_{\rm{A}}$, ${\bf{c}}$, and ${\bf{r}}$, $u_k$ and $v_k$ are optimally given by

\begin{equation}\label{gamma}
u_k^*=\frac{\left|{\bf{h}}_{k}\left({\bf{c}},{\bf{r}}\right)^H{\bf{W}}_{\rm{A}}{\bf{w}}_k\right|^2}{\sigma_k^2+\sum_{k'\neq k}\left|{\bf{h}}_{k}\left({\bf{c}},{\bf{r}}\right)^H{\bf{W}}_{\rm{A}}{\bf{w}}_{k'}\right|^2}
\end{equation}
and 
\begin{equation}\label{omega}
v_k^*=\frac{{\bf{h}}_{k}\left({\bf{c}},{\bf{r}}\right)^H{\bf{W}}_{\rm{A}}{\bf{w}}_k}{\sigma_k^2+\sum_{k'=1}^K\left|{\bf{h}}_{k}\left({\bf{c}},{\bf{r}}\right)^H{\bf{W}}_{\rm{A}}{\bf{w}}_{k'}\right|^2},
\end{equation}
\noindent respectively. Note that the objective function contains a term $\sum_{k=1}^K(\log(1+u_k)-u_k-(1+u_k)|v_k|^2\sigma_k^2)$, which is only determined by variables ${\bf{u}}$ and ${\bf{v}}$. For convenience, we omit it in the rest of the paper.

\subsection{DBF Design}
Given $\{{\bf{W}}_{\rm{A}},{\bf{c}},{\bf{r}},{\bf{u}},{\bf{v}}\}$, we aim to determine the DBF ${\bf{W}}_{\rm{D}}$. The power constraint can be rewritten as \cite{HBF1}

\begin{equation}
\left|\left|{\bf{W}}_{{\rm{A}}}{\bf{W}}_{{\rm{D}}}\right|\right|_F^2=M\left|\left|{\bf{W}}_{{\rm{D}}}\right|\right|_F^2\leq P.
\end{equation}

\noindent Thus, ${\bf{W}}_{\rm{A}}$ is irrelevant to the constraint $\eqref{power_budget}$. The DBF design subproblem is expressed as

\begin{subequations}\label{D_optimize}
\begin{align}
\max_{{\bf{W}}_{\rm{D}}}~&\sum_{k=1}^K\left(-\mu_k\sum_{k'=1}^K\left|{\bm{\psi}}_{k}^H{\bf{w}}_{k'}\right|^2+2\rm{Re}\left\{\bm{\omega}_k^H{\bf{w}}_k\right\}\right)\\
{\rm {s.t.}}~&\left|\left|{\bf{W}}_{{\rm{D}}}\right|\right|_F^2\leq \frac{P}{M},
\end{align}
\end{subequations}
\noindent where $\bm{\omega}_k^H=(1+u_k)v_k^{\dag}{\bm{\psi}}_k^H$, $\bm{\psi}_k^H={\bf{h}}_k({\bf{c}},{\bf{r}})^H{\bf{W}}_{\rm{A}}$, and $\mu_k=(1+u_k)|v_k^2|$. Let ${\bm{\Psi}}=\sum_{k=1}^K\mu_k\bm{\psi}_k\bm{\psi}_k^H$. The Lagrangian multiplier approach enables a closed-form solution for the convex quadratic optimization problem, given by

\begin{equation}\label{D_optimal}
{\bf{w}}_k^*=\left({\bm{\Psi}}+\lambda{\bf{I}}_{N}\right)^{-1}\bm{\omega}_k,
\end{equation}
where $\lambda$ represents the Lagrange multiplier that satisfies $\lambda(||{\bf{W}}_{\rm{D}}||_F^2-\frac{P}{M})=0$. If $||{\bf{W}}_{\rm{D}}^*||_F^2\leq\frac{P}{M}$, $\lambda=0$. Otherwise, we use the bisection search to determine $\lambda$.

\subsection{ABF Design}

Given $\{{\bf{W}}_{\rm{D}},{\bf{c}},{\bf{r}},{\bf{u}},{\bf{v}}\}$, the next step is to optimize ${\bf{W}}_{\rm{A}}$. First, the objective function is reconstructed to decouple the ABF from the other variables. Let $\bar{{\bf{b}}}=[{\bf{b}}_{1}^T,{\bf{b}}_{1}^T\cdots,{\bf{b}}_{N}^T]^T$, $\bar{{\bf{h}}}_{k,k'}^H={\bf{h}}_{k}({\bf{c}},{\bf{r}})^H({\rm{diag}}({\bf{w}}_{k'})\otimes {{\bf{I}}_{M}})$, and $\bar{{\bm{\omega}}}_k^H=(1+u_k)v_k^{\dag}\bar{{\bf{h}}}_{k,k}^H$. The ABF design subproblem is reformulated as

\begin{subequations}\label{A_optimize}
\begin{align}
\min_{\bar{\bf{b}}}~&\mathcal{\bar{L}}\left({\bar{\bf{b}}}\right)=\sum_{k=1}^K\left(\mu_k\sum_{k'=1}^K\left|\bar{{\bf{h}}}_{k,k'}^H{\bf{\bar{b}}}\right|^2-2\rm{Re}\left\{\bar{{\bm{\omega}}}_k^H{\bf{\bar{b}}}\right\}\right)\label{A_objective}\\
{\rm{s.t.}}&\left|\left[\bar{\bf{b}}\right]_s\right|=1,\forall s=1,2,\cdots,MN.\label{umc}
\end{align}
\end{subequations}
The main obstacle of solving the subproblem is the non-convex constraint \eqref{umc}. The unit-modulus constraint is common in reconfigurable intelligent surface-aided communications and hybrid beamfoming designs. It can be tackled through the penalty method or the majorization-maximization algorithm \cite{HBF4,RIS}. However, the results of these methods depend heavily on the choice of the initial value, which can potentially lead to worse solutions. In this subsection, we use the MO method, which has been claimed to find a near-optimal solution for the ABF design and perform well in simulations \cite{HBF1,HBF5}.

The unit-modulus constraints can be regarded as a search space (feasible region) $\mathcal{S}=\{\bar{\bf{b}}\in\mathbb{C}^{MN\times 1}:|[\bar{\bf{b}}]_s|=1,~\forall s=1,2,\cdots,MN\}$, which in \eqref{A_optimize} is the product of $MN$ complex circles, forming a Riemannian submanifold of $\mathbb{C}^{MN\times 1}$. In order to fully realize Riemannian MO, the algorithm must guarantee each iteration point remaining on the manifold and the objective function increasing with each iteration. Next, we will briefly introduce several concepts to meet these criteria.

For a given point $\bar{\bf{b}}$ on the manifold $\mathcal{S}$, to prevent excessive deviation of the update point from the manifold $\mathcal{S}$, we choose the search directions on its tangent plane, which is given by

\begin{equation}
    T_{\bar{\bf{b}}}{\mathcal{S}} = \left\{{\bf{z}}\in{\mathbb{C}}^{MN\times 1}:\rm{Re}\left\{{\bf{z}}\odot{\bar{{\bf{b}}}}^{\dag}\right\}={\bf{0}}_{MN}\right\}.
\end{equation}
The Riemannian gradient denotes the search direction of the steepest ascent of the objective function \eqref{A_objective}. It can be calculated by projecting the Euclidean gradient $\nabla_{\bar{\bf{b}}^{\dag}}\mathcal{L}$ orthogonally onto the tangent plane of the manifold $\mathcal{S}$, as given by \cite{Manifold}

\begin{figure*}[!b]
    \normalsize
    \setcounter{equation}{33}
    \hrulefill
    \begin{equation}
        \label{partial_c}
        \frac{\partial{\bf{h}}_k^H\left({\bf{c}},{\bf{r}}\right)}{\partial{\bf{c}}_n} = \left[{\bf{0}}_{3\times \left(n-1\right)M}, \left[\sum_{l=1}^{L_k^t}j\frac{2\pi}{\lambda}{\hat{\sigma}}_{k,l}^*{{\bf{e}}_3^T{\bf{R}}^T_n\left({\bm{\rho}}_{k,l}{\bm{\rho}}_{k,l}^T-{\bf{I}}_3\right){\bf{p}}_{k,l}^r}e^{j\frac{2\pi}{\lambda}\left({\bf{t}}_{m,n}\right)^T{\bm{\rho}}_{k,l}}{\bm{\rho}}_{k,l}\right]_{1\leq m\leq M},{\bf{0}}_{3\times \left(N-n\right)M}\right]
    \end{equation}

    \begin{equation}
        \label{partial_r}
        \left[\frac{\partial{\bf{h}}_k^H\left({\bf{c}},{\bf{r}}\right)}{\partial\Gamma_n}\right]_{\left(n'-1\right)M+m} = \left\{\begin{aligned}
            &\sum_{l=1}^{L_k^t}{\left(j\frac{2\pi}{\lambda}{\hat{\sigma}}_{k,l}^*{{\bf{e}}_3^T{\bf{R}}^T_{n'}\left({\bm{\rho}}_{k,l}{\bm{\rho}}_{k,l}^T-{\bf{I}}_3\right){\bf{p}}_{k,l}^r}e^{j\frac{2\pi}{\lambda}\left({\bf{t}}_{m,n'}\right)^T{\bm{\rho}}_{k,l}}{\bm{\delta}}_m^T\frac{\partial{\bf{R}}^T_{n'}}{\partial{\Gamma}_n}{\bm{\rho}}_{k,l}\right.}\\
        &\quad\quad\left.+{\hat{\sigma}}_{k,l}^*{{\bf{e}}_3^T\frac{\partial{\bf{R}}^T_{n'}}{\partial{\Gamma}_n}\left({\bm{\rho}}_{k,l}{\bm{\rho}}_{k,l}^T-{\bf{I}}_3\right){\bf{p}}_{k,l}^r}e^{j\frac{2\pi}{\lambda}\left({\bf{t}}_{m,n'}\right)^T{\bm{\rho}}_{k,l}}\right),&&n'=n,\\
        &0, && {\rm{otherwise}}.
        \end{aligned}\right.
    \end{equation}
\setcounter{equation}{24}
\end{figure*}

\begin{equation}
    {\rm{grad}}_{\bar{\bf{b}}}\mathcal{\bar{L}} = {\rm{Proj}}_{\bar{\bf{b}}}\nabla_{\bar{\bf{b}}^{\dag}}\mathcal{\bar{L}} = \nabla_{\bar{\bf{b}}^{\dag}}\mathcal{\bar{L}}-{\rm{Re}}\left\{\nabla_{\bar{\bf{b}}^{\dag}}\mathcal{\bar{L}}\odot{\bar{\bf{b}}}^{\dag}\right\}\odot{\bar{\bf{b}}},
\end{equation}
where
\begin{equation}
    \nabla_{\bar{\bf{b}}^{\dag}}\mathcal{\bar{L}}=\sum_{k=1}^K\left(\mu_k\sum_{k'=1}^K\bar{{\bf{h}}}_{k,k'}\bar{{\bf{h}}}_{k,k'}^H{\bf{\bar{b}}}-\bar{{\bm{\omega}}}_k\right).
\end{equation}
Furthermore, we follow the well-known Polak-Ribiere parameter to ensure that the objective value remains non-increasing in each iteration \cite{Manifold2}. The search direction ${\bm{\eta}}^{(i+1)}$ in the $(i+1)$-th iteration is determined by its Riemannian gradient and the projection of the $i$-th search direction ${\bm{\eta}}^{(i)}$ from the $i$-th tangent plane onto the $(i+1)$-th one, as given by

\begin{algorithm}[t!]
    \renewcommand{\algorithmicrequire}{\bf{Input:}}
    \renewcommand{\algorithmicensure}{\bf{Output:}}
    \caption{MO-Based ABF Design}
    \label{MO}
    \begin{algorithmic}[1]
    \REQUIRE${\bf{W}}_{\rm{D}},{\bf{c}},{\bf{r}},{\bf{u}},{\bf{v}},{\bar{\bf{b}}}^{(0)},\varepsilon,I_{max}$.
    \ENSURE ${\bar{\bf{b}}}^{*}$.
    \STATE Set iteration index $i=0$ and ${\bm{\eta}}^{(0)}=-{\rm{grad}}_{\bar{\bf{b}}^{\left(0\right)}}\mathcal{L}$.
    \REPEAT
    \STATE Determine the step size $v^{\left(i\right)}$ according to \cite{Manifold}.
    \STATE Obtain ${\bar{\bf{b}}^{\left(i+1\right)}}$ via $\eqref{retraction}$.
    \STATE Determine $u^{\left(i+1\right)}$ according to \cite{Manifold}.
    \STATE Obtain ${\bm{\eta}}^{\left(i+1\right)}$ via $\eqref{search direction}$.

    \STATE Set $i=i+1$.
    \UNTIL{$\left|\mathcal{\bar{L}}\left(\bar{\bf{b}}^{\left(i+1\right)}\right)-\mathcal{\bar{L}}\left(\bar{\bf{b}}^{\left(i\right)}\right)\right|<\varepsilon$} or the maximum iteration number $I_{max}$ is reached.
    \end{algorithmic}
\end{algorithm}

\begin{equation}\label{search direction}
    {\bm{\eta}}^{\left(i+1\right)} = -{\rm{grad}}_{\bar{\bf{b}}^{\left(i+1\right)}}\mathcal{\bar{L}}+u^{\left(i+1\right)}{\rm{Proj}}_{\bar{\bf{b}}^{\left(i+1\right)}}{\bm{\eta}}^{\left(i\right)},
\end{equation}
where $u$ is the Polak-Ribirer parameter and the superscripts $i$ and $i+1$ denote the variables in the corresponding iterations.

After determining the search direction, the retraction is adopted to map the next point from the tangent plane onto the manifold $\mathcal{S}$. Similarly, the Armijo backtracking line search step $v^{(i)}$ is accepted to ensure the non-increasing objective function, thus leading to

\begin{equation}\label{retraction}
    \left[{\bar{\bf{b}}^{\left(i+1\right)}}\right]_s = \frac{\left[{\bar{\bf{b}}^{\left(i\right)}}+v^{\left(i\right)}{\bm{\eta}}^{\left(i\right)}\right]_s}{\left|\left[{\bar{\bf{b}}^{\left(i\right)}}+v^{\left(i\right)}{\bm{\eta}}^{\left(i\right)}\right]_s\right|},~\forall s=1,2,\cdots,MN.
\end{equation}
The overall MO workflow is summarized as Algorithm \ref{MO}. Furthermore, the Riemannian gradient approaches zero as the update of the iteration point, thereby confirming the convergence of Algorithm \ref{MO} \cite{Manifold}.

\subsection{Antenna Position and Orientation Design}

Given $\{{\bf{W}}_{\rm{D}},{\bf{W}}_{\rm{A}},{\bf{u}},{\bf{v}}\}$, we consider sequentially optimizing the position and orientation of \ac{6dma} ${\bf{c}}_n$ and ${\bf{r}}_n$ with the other central points and orientations of UPAs being fixed. Let $\hat{\bm{\Psi}}_{k}=\mu_k\sum_{k'=1}^{K}{\bf{W}}_{\rm{A}}{\bf{w}}_{{k'}}{\bf{w}}_{{k'}}^H{\bf{W}}_{\rm{A}}^H$ and $\hat{\bm{\omega}}_{k}=(1+u_k)v_k^{\dag}{\bf{W}}_{\rm{A}}{\bf{w}}_{{k}}$. Given ${\bf{r}}_n$, the \ac{6dma} position design subproblem is simplified as

\begin{equation}\label{P_optimization}
\begin{aligned}
\max_{{\bf{c}}_{n}\in\mathcal{C}_{n}}{\mathcal{L}\left({\bf{c}}_n,{\bf{r}}_n\right)}=&\sum_{k=1}^K\left(-{\bf{h}}_{k}\left({\bf{c}},{\bf{r}}\right)^H\hat{\bf{\Psi}}_k{\bf{h}}_{k}\left({\bf{c}},{\bf{r}}\right)\right.\\
&+\left.2\rm{Re}\left\{{\bf{h}}_{k}\left({\bf{c}},{\bf{r}}\right)^H\hat{\bm{\omega}}_{k}\right\}\right),
\end{aligned}
\end{equation}

\noindent
To reduce the complexity for addressing ${\mathcal{L}({\bf{c}}_n,{\bf{r}}_n)}$, a gradient descent method is considered to update ${\bf{c}}_{n}$. Similarly, we optimize ${\bf{r}}_{n}$ following the same procedure with fixed ${\bf{c}}_n$. For $n$-th UPA, the gradient values of ${\bf{c}}_n$ and ${\bf{r}}_n$ are calculated as

\begin{equation}\label{gradient_value}
\begin{aligned}
\nabla_{{\bf{c}}_{n}}{\mathcal{L}\left({\bf{c}},{\bf{r}}\right)}=2\sum_{k=1}^K\rm{Re}\left\{\frac{\partial{{\bf{h}}_{k}^H\left({\bf{c}},{\bf{r}}\right)}}{\partial{{\bf{c}}_{n}}}\left(\hat{\bm{\omega}}_{k}-\hat{\bf{\Psi}}_k{\bf{h}}_{k}\left({\bf{c}},{\bf{r}}\right)\right)\right\},
\end{aligned}
\end{equation}
and
\begin{equation}\label{gradient_value_r}
\begin{aligned}
\nabla_{{\bf{r}}_{n}}{\mathcal{L}\left({\bf{c}},{\bf{r}}\right)}=2\sum_{k=1}^K\rm{Re}\left\{\frac{\partial{{\bf{h}}_{k}^H\left({\bf{c}},{\bf{r}}\right)}}{\partial{{\bf{r}}_{n}}}\left(\hat{\bm{\omega}}_{k}-\hat{\bf{\Psi}}_k{\bf{h}}_{k}\left({\bf{c}},{\bf{r}}\right)\right)\right\},
\end{aligned}
\end{equation}
respectively. To further derive the specific expression of the gradient, the simplified expression of the parameters in the channel model \eqref{general channel model} is presented in the following proposition.

\begin{proposition}\label{expressions of channel gains}
Combining $\eqref{antenna aperture}$, $\eqref{theta_LCS}$, \eqref{polarization vector}, \eqref{polarization matching}, and the fact that ${\bf{R}}^{-1}_n={\bf{R}}^T_n$, the power gains resulting from the antenna radiation pattern and polarization matching are given by

\begin{subequations}
\begin{align}
    A_{E}\left(\tilde{\theta}_{k,l,n}\right) &= \sqrt{\frac{3}{2}\left({1-\left({\bf{e}}_3^T{\bf{R}}^T_n{\bm{\rho}}_{k,l}\right)^2}\right)},\label{aperture gain}\\
    A_{P}\left(\tilde{\theta}_{k,l,n},\tilde{\phi}_{k,l,n}\right) &= \frac{{\bf{e}}_3^T{\bf{R}}^T_n\left({\bm{\rho}}_{k,l}{\bm{\rho}}_{k,l}^T-{\bf{I}}_3\right){\bf{p}}_{k,l}^r}{\sqrt{{1-\left({\bf{e}}_3^T{\bf{R}}^T_n{\bm{\rho}}_{k,l}\right)^2}}}.\label{polarization gain}
\end{align}
\end{subequations}
 Let $\hat{\sigma}_{k,l}=\sqrt{\frac{3}{2}}\sum_{l'=1}^{L_k^r}\sigma_k^{(l,l')}$. By substituting \eqref{aperture gain} and \eqref{polarization gain} into the general channel model \eqref{general channel model}, we have

\begin{equation}
\begin{aligned}
    &\left[{\bf{h}}_k\left({\bf{c}},{\bf{r}}\right)\right]_{\left(n-1\right)M+m} \\&=  \sum_{l=1}^{L_k^t}{{\hat{\sigma}}_{k,l}{{\bf{e}}_3^T{\bf{R}}^T_n\left({\bm{\rho}}_{k,l}{\bm{\rho}}_{k,l}^T-{\bf{I}}_3\right){\bf{p}}_{k,l}^r}e^{-j\frac{2\pi}{\lambda}\left({\bf{t}}_{m,n}\right)^T{\bm{\rho}}_{k,l}}}.
\end{aligned}
\end{equation}
\end{proposition}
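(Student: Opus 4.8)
The plan is to verify the Proposition by substituting the rotation identity ${\bf R}_n^{-1} = {\bf R}_n^T$ into the angular expressions and simplifying the radiation-pattern and polarization factors one at a time, then recombining them in the channel-gain formula \eqref{general channel model}.

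First I would establish \eqref{aperture gain}. Starting from \eqref{antenna aperture} with $\kappa = 2$ and $G_\kappa = \tfrac{3}{2}$, the pattern coefficient is $A_E(\tilde\theta_{k,l,n}) = \sqrt{\tfrac{3}{2}}\,|\cos\tilde\theta_{k,l,n}|$. From \eqref{theta_LCS}, $\sin\tilde\theta_{k,l,n} = [0,0,1]{\bf R}_n^{-1}{\bm\rho}_{k,l} = {\bf e}_3^T{\bf R}_n^T{\bm\rho}_{k,l}$, so $\cos^2\tilde\theta_{k,l,n} = 1 - ({\bf e}_3^T{\bf R}_n^T{\bm\rho}_{k,l})^2$, which is exactly \eqref{aperture gain}. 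Next I would handle \eqref{polarization gain}. Using \eqref{polarization vector} with $F_{\tilde\theta}=1$, $F_{\tilde\phi}=0$, the transmit polarization vector reduces to ${\bf p}_{k,l,n}^t = \big({\bf e}_{\theta_{k,l}}{\bf e}_{\theta_{k,l}}^T + {\bf e}_{\phi_{k,l}}{\bf e}_{\phi_{k,l}}^T\big){\bf R}_n{\bf e}_{\tilde\theta_{k,l,n}}$. The key geometric identity is that ${\bm\rho}_{k,l}$, ${\bf e}_{\theta_{k,l}}$, ${\bf e}_{\phi_{k,l}}$ form an orthonormal triad, so ${\bf e}_{\theta_{k,l}}{\bf e}_{\theta_{k,l}}^T + {\bf e}_{\phi_{k,l}}{\bf e}_{\phi_{k,l}}^T = {\bf I}_3 - {\bm\rho}_{k,l}{\bm\rho}_{k,l}^T$; this is the projector onto the plane orthogonal to ${\bm\rho}_{k,l}$. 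Then, plugging into \eqref{polarization matching} and using ${\bf R}_n{\bf e}_{\tilde\theta_{k,l,n}}$: I need ${\bf e}_{\tilde\theta_{k,l,n}}$ in terms that collapse nicely. The simplest route is to note ${\bf e}_{\tilde\theta_{k,l,n}} = \partial\tilde{\bm\rho}_{k,l}/\partial\tilde\theta_{k,l,n}$ and use the fact that after the projection $({\bf I}_3 - {\bm\rho}\brho^T)$ acts, only the component matters, giving $A_P = ({\bf p}_{k,l,n}^t)^T{\bf p}_{k,l}^r$ with a normalization that, after dividing by $|\cos\tilde\theta_{k,l,n}| = \sqrt{1-({\bf e}_3^T{\bf R}_n^T{\bm\rho}_{k,l})^2}$ built into $A_E$, yields \eqref{polarization gain}. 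Concretely, I would show ${\bf R}_n{\bf e}_{\tilde\theta_{k,l,n}}$ projected onto the tangent plane equals $-({\bm\rho}_{k,l}{\bm\rho}_{k,l}^T - {\bf I}_3){\bf R}_n{\bf e}_3 \big/\, ({\rm something})$, but the cleaner algebraic fact is $({\bf I}_3 - {\bm\rho}\brho^T){\bf R}_n{\bf e}_{\tilde\theta} = -\tfrac{1}{\cos\tilde\theta}({\bf I}_3 - {\bm\rho}\brho^T)\,{\bf R}_n{\bf e}_3 \cdot(\text{sign})$ — I would verify this by direct computation using the explicit form of ${\bf e}_{\tilde\theta}$ in \eqref{spherical unit vector} and the definition of $\tilde\theta,\tilde\phi$ in \eqref{theta_LCS}--\eqref{phi_LCS}.

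After the two factors are in hand, the last step is bookkeeping: form the product $A_E(\tilde\theta_{k,l,n})A_P(\tilde\theta_{k,l,n},\tilde\phi_{k,l,n})$, observe that the $\sqrt{1-({\bf e}_3^T{\bf R}_n^T{\bm\rho}_{k,l})^2}$ denominator in \eqref{polarization gain} cancels the identical factor in the numerator of \eqref{aperture gain}, leaving $\sqrt{\tfrac{3}{2}}\,{\bf e}_3^T{\bf R}_n^T({\bm\rho}_{k,l}{\bm\rho}_{k,l}^T - {\bf I}_3){\bf p}_{k,l}^r$. Multiplying by the FRV entry $e^{j\frac{2\pi}{\lambda}({\bf t}_{m,n})^T{\bm\rho}_{k,l}}$ from \eqref{general channel model} and summing the path-response coefficients over the $L_k^r$ receive paths — which is where $\hat\sigma_{k,l} = \sqrt{\tfrac{3}{2}}\sum_{l'=1}^{L_k^r}\sigma_k^{(l,l')}$ comes from, since ${\bm\Sigma}_k{\bf 1}_{L_k^r}$ sums each row — and taking the Hermitian conjugate (hence the sign flip in the exponent for $[{\bf h}_k]_{(n-1)M+m}$ versus $[{\bf h}_k^H]$) gives the claimed closed form.

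The main obstacle I anticipate is the middle step: correctly reducing the polarization inner product $A_P = ({\bf p}_{k,l,n}^t)^T{\bf p}_{k,l}^r$ to the compact form ${\bf e}_3^T{\bf R}_n^T({\bm\rho}_{k,l}{\bm\rho}_{k,l}^T - {\bf I}_3){\bf p}_{k,l}^r / \sqrt{1 - ({\bf e}_3^T{\bf R}_n^T{\bm\rho}_{k,l})^2}$. This requires recognizing that (i) the spherical-basis projector in the GCS equals ${\bf I}_3 - {\bm\rho}_{k,l}{\bm\rho}_{k,l}^T$, (ii) the LCS vector ${\bf e}_{\tilde\theta_{k,l,n}}$, when mapped by ${\bf R}_n$ and projected, is proportional to the projection of ${\bf R}_n{\bf e}_3$ with proportionality constant $-1/\cos\tilde\theta_{k,l,n}$ (up to sign conventions), and (iii) the receive polarization vector ${\bf p}_{k,l}^r$ already lies in the tangent plane so the projector can be dropped against it — i.e. $({\bf I}_3 - {\bm\rho}\brho^T){\bf p}_{k,l}^r = {\bf p}_{k,l}^r$. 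Managing the sign of $\cos\tilde\theta_{k,l,n}$ (the absolute value in the radiation pattern versus the signed quantity appearing in the final expression) and confirming the reference-phase conventions line up between \eqref{array model}, \eqref{general channel model}, and the conjugated result is the fiddly part; everything else is substitution.
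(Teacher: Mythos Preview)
Your proposal is correct and follows essentially the same route as the paper's proof: derive $A_E$ by substituting $\sin\tilde\theta_{k,l,n}={\bf e}_3^T{\bf R}_n^T{\bm\rho}_{k,l}$ from \eqref{theta_LCS}, invoke the orthonormal-triad identity ${\bf e}_{\theta_{k,l}}{\bf e}_{\theta_{k,l}}^T+{\bf e}_{\phi_{k,l}}{\bf e}_{\phi_{k,l}}^T={\bf I}_3-{\bm\rho}_{k,l}{\bm\rho}_{k,l}^T$, compute ${\bf R}_n{\bf e}_{\tilde\theta_{k,l,n}}$ explicitly from \eqref{spherical unit vector} using the sin/cos expressions, and then recombine in \eqref{general channel model} with the $\sqrt{1-(\cdot)^2}$ cancellation. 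One small simplification the paper exploits that you can adopt: from the explicit form of ${\bf e}_{\tilde\theta_{k,l,n}}$ one gets ${\bf R}_n{\bf e}_{\tilde\theta_{k,l,n}}=\big({\bm\rho}_{k,l}{\bm\rho}_{k,l}^T-{\bf I}_3\big){\bf R}_n{\bf e}_3\big/\sqrt{1-({\bf e}_3^T{\bf R}_n^T{\bm\rho}_{k,l})^2}$ \emph{exactly} (not merely after projection), so the projector is idempotent on it and your hedge (iii) about ${\bf p}_{k,l}^r$ lying in the tangent plane is unnecessary.
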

\begin{proof}
See Appendix A.
\end{proof}

\noindent Denote $\frac{\partial{\bf{h}}_k^H({\bf{c}},{\bf{r}})}{\partial{\bf{r}}_n}=[\frac{\partial{\bf{h}}_k^H({\bf{c}},{\bf{r}})}{\partial{{\alpha}}_n},\frac{\partial{\bf{h}}_k^H({\bf{c}},{\bf{r}})}{\partial{{\beta}}_n},\frac{\partial{\bf{h}}_k^H({\bf{c}},{\bf{r}})}{\partial{{\gamma}}_n}]^T$. The expressions of partial derivative $\frac{\partial{\bf{h}}_k^H({\bf{c}},{\bf{r}})}{\partial{\bf{c}}_n}$ and $\frac{\partial{\bf{h}}_k^H({\bf{c}},{\bf{r}})}{\partial{{\Gamma}}_n}$ are shown at the bottom of this page, where ${\Gamma}_n\in\{\alpha_n,\beta_n,\gamma_n\}$.

\begin{algorithm}[t!]
    \renewcommand{\algorithmicrequire}{\bf{Input:}}
    \renewcommand{\algorithmicensure}{\bf{Output:}}
    \caption{Proposed Algorithm for Solving Problem $\eqref{optimization_problem}$}\label{algorithm_2}
    \label{algorithm_1}
    \begin{algorithmic}[1]
    \REQUIRE${\bf{W}}_{\rm{D}}^{(0)},{\bf{W}}_{\rm{A}}^{(0)},{\bf{c}}^{(0)},{\bf{r}}^{(0)},{\bf{u}}^{(0)},{\bf{v}}^{(0)},\{R_k^{(0)}\}_{k=1}^K,\hat{\kappa}_{\bf{c}},\hat{\kappa}_{\bf{r}},$\\
    $~~\quad\varepsilon,I_{max}$.
    \ENSURE ${{\bf{W}}_{\rm{D}}^{*},{\bf{W}}_{\rm{A}}^{*},{\bf{c}}^{*},{\bf{r}}^{*}}$.
    \STATE Initialize $j=1$.
    \REPEAT
    \STATE Obtain ${\bf{u}}^{(j)}$ and ${\bf{v}}^{(j)}$ via \eqref{gamma} and \eqref{omega}, respectively.\
    \STATE Obtain ${\bf{W}}_{\rm{D}}^{(j)}$ via $\eqref{D_optimal}$.
    \STATE Obtain ${\bar{\bf{b}}}^{(j)}$ via algorithm \ref{MO} and recover it to ${\bf{W}}_{\rm{A}}^{(j)}$ based on \eqref{unit_modulus_constraint}.

    \FORALL{$n=1:N$}
    \STATE Calculate the gradient $\nabla_{{\bf{c}}_{n}^{(j)}}{\mathcal{L}}$ and $\nabla_{{\bf{r}}_{n}^{(j)}}{\mathcal{L}}$ via $\eqref{gradient_value}$ and \eqref{gradient_value_r}, respectively.
    \STATE Initialize the step size $\kappa_{\bf{c}}=\hat{\kappa}_{\bf{c}}$ and $\kappa_{\bf{r}}=\hat{\kappa}_{\bf{r}}$.
    \REPEAT
    {\STATE Compute $\hat{{\bf{c}}}_{n}={{\bf{c}}}_{n}^{(j)}+\kappa_{\bf{c}}\nabla_{({\bf{c}}_{n}^{(j)})}{\mathcal{L}}$.
    \STATE Shrink the step size $\kappa_{\bf{c}}\leftarrow\frac{\kappa_{\bf{c}}}{2}$.
    \STATE Update ${{\bf{c}}}_{n}^{\left(j+1\right)}$ according to \eqref{Position_Update}.}
    \UNTIL{$\hat{{\bf{c}}}_{n}\in\mathcal{C}_{n}$ and $\mathcal{L}(\hat{{\bf{c}}}_{n},{{\bf{r}}}_{n}^{(j)})\geq\mathcal{L}({{\bf{c}}}_{n}^{(j)},{{\bf{r}}}_{n}^{(j)})$}.

    \REPEAT
    {\STATE Compute $\hat{{\bf{r}}}_{n}={{\bf{r}}}_{n}^{(j)}+\kappa_{\bf{r}}\nabla_{({\bf{r}}_{n}^{(j)})}{\mathcal{L}}$.
    \STATE Shrink the step size $\kappa_{\bf{r}}\leftarrow\frac{\kappa_{\bf{r}}}{2}$.
    \STATE Update ${{\bf{r}}}_{n}^{\left(j+1\right)}$ according to \eqref{orientation selection}.}
    \UNTIL{$\hat{{\bf{r}}}_{n}\in\mathcal{R}$ and $\mathcal{L}({{\bf{c}}}_{n}^{(j+1)},\hat{{\bf{r}}}_{n})\geq\mathcal{L}({{\bf{c}}}_{n}^{(j+1)},{{\bf{r}}}_{n}^{(j)})$}.
    
    \ENDFOR
    \STATE Calculate objective value $\sum_{k=1}^KR_k^{(j+1)}$ via $\eqref{rate}$.
    \STATE Update $j=j+1$.
    \UNTIL{$\left|\sum_{k=1}^K\left(R_k^{(j+1)}-R_k^{(j)}\right)\right|<\varepsilon$} or the maximum iteration number $I_{max}$ is reached.
    \end{algorithmic}
\end{algorithm}

In the $j$-th iteration, the central point and orientation of the $n$-th UPA are both updated by searching along the gradient directions. Then, we check whether the updated points remain within feasible regions and result in an increase in the objective value. Define $\mathcal{L}({{\bf{c}}}_{n})\triangleq\mathcal{L}({{\bf{c}}}_{n},{\bf{r}}_n^{(j)})$ and $\mathcal{L}({{\bf{r}}}_{n})\triangleq\mathcal{L}({\bf{c}}_n^{(j+1)},{{\bf{r}}}_{n})$, the update rules are given by

$\setcounter{equation}{35}$\begin{equation}\label{Position_Update}
{\bf{c}}_{n}^{\left(j+1\right)}=\left\{
\begin{aligned}
\hat{{\bf{c}}}_{n},&~{\rm{if}}~\hat{{\bf{c}}}_{n}\in\mathcal{C}_{n} ~ {\rm{and}}~ \mathcal{L}\left(\hat{{\bf{c}}}_{n}\right)\geq\mathcal{L}({{\bf{c}}}_{n}^{(j)}),\\
{\bf{c}}_{n}^{\left(j\right)},&~{\rm{otherwise}},
\end{aligned}
\right.
\end{equation}
and
\begin{equation}\label{orientation selection}
{\bf{r}}_{n}^{\left(j+1\right)}=\left\{
\begin{aligned}
\hat{{\bf{r}}}_{n},&~{\rm{if}}~\hat{{\bf{r}}}_{n}\in\mathcal{R} ~ {\rm{and}}~ \mathcal{L}\left(\hat{{\bf{r}}}_{n}\right)\geq\mathcal{L}({\bf{r}}_n^{(j)}),\\
{\bf{r}}_{n}^{\left(j\right)},&~{\rm{otherwise}},
\end{aligned}
\right.
\end{equation}
where $\hat{{\bf{c}}}_{n}={\bf{c}}_{n}^{\left(j\right)}+\kappa_{\bf{c}}\nabla_{{\bf{c}}_{n}}{\mathcal{L}({\bf{c}}_{n}^{(j)}})$ and $\hat{{\bf{r}}}_{n}={\bf{r}}_{n}^{\left(j\right)}+\kappa_{\bf{r}}\nabla_{{\bf{r}}_{n}}{\mathcal{L}({\bf{r}}_{n}^{(j)}})$ with $\kappa_{\bf{c}}$ and $\kappa_{\bf{r}}$ being the step sizes. To be noted, the effectiveness of gradient descent is highly sensitive to the step size. Therefore, before updating each UPA's position, we set out $\kappa_{\bf{c}}$ to a large positive value and then repeatedly update it using $\kappa_{\bf{c}}\leftarrow{\frac{\kappa_{\bf{c}}}{2}}$ until a feasible solution is found such that $\mathcal{L}(\hat{{\bf{c}}}_{n})\geq\mathcal{L}({{\bf{c}}}_{n}^{(j)})$. Then, with given ${\bf{c}}_n^{(j+1)}$, the orientation of the $n$-th UPA is updated following the same procedure.

Algorithm \ref{algorithm_2} outlines the workflow for solving problem \eqref{optimization_problem}. Convergence is confirmed due to the non-decreasing nature of the objective function throughout the iterations and the existence of a finite optimal value. The computational complexities of solving the variables $\{{\bf{W}}_{\rm{D}},{\bf{W}}_{\rm{A}},{\bf{c}},{\bf{r}},{\bf{u}},{\bf{v}}\}$ are scaled with $\mathcal{O}(KN^3)$, $\mathcal{O}(I_{MO}K^2MN)$, $\mathcal{O} (I_{GD}(MN\sum_{k=1}^K{L_k}))$, $\mathcal{O} (I_{GD}(MN\sum_{k=1}^K{L_k}))$,  $\mathcal{O}(KMN)$, and $\mathcal{O}(KMN)$, respectively. Here, $I_{MO}$ and $I_{GD}$ are the number of iterations for Algorithm \ref{algorithm_1} and the gradient descent method, respectively. Furthermore, we denote the number of iterations needed for the convergence of the AO algorithm as $I_{AO}$. Considering the dominant computational steps, the maximum computational complexity of Algorithm \ref{algorithm_2} can be expressed as $\mathcal{O}(I_{AO}(KN^3+I_{MO}K^2MN+I_{GD}(MN\sum_{k=1}^K{L_k})))$.

\section{Numerical Results}

This section presents numerical results to evaluate the effectiveness of our proposed \ac{6dma}-aided hybrid beamforming scheme. Unless stated otherwise, we analyze a scenario in which a \ac{6dma} BS, equipped with $16$ antennas and operating at a carrier frequency of $f_c = 30$ GHz, serves $K=4$ users. The number of $2\times2$ UPAs is $N=4$. The central point of each UPA remains in the $x$-$o$-$z$ coordinate, wherein antenna element spacing is $\frac{\lambda}{2}$, where $\lambda$ represents the wavelength. We assume the number of transmit and receive channel paths for each user is equal, i.e., $L_k^t=L_k^r=L=6$. The distance from the BS to the $k$-th user follows uniformly distributed $\mathcal{U}[20,100]$. The PRM is defined as a diagonal matrix $\Sigma_k={\rm{diag}}([\sigma_{k,1},\sigma_{k,2},\cdots,\sigma_{k,L}])$, where $\sigma_{k,l}\sim\mathcal{CN}(0,(\rho_0d_k^{-\alpha})^2/L)$ with $\rho_0=-40$ dB and $\alpha=2.8$. It is assumed that the elevation and azimuth angles follow the joint probability density function $f(\theta_{k,l},\phi_{k,l})=\frac{\cos\theta_{k,l}}{2\pi}$\cite{MA1}. The movable range for each UPA is $D\times0\times D$ and the rotatable range for the roll, pitch, and yaw angles are all set as $[-\zeta,\zeta]$. In addition, $P=10$ dBm, $\zeta=20^\circ$, $\sigma_k^2=-80$ dBm, $D=2\lambda$, $\hat{\kappa}_{\bf{c}}=\hat{\kappa}_{\bf{r}}=10$, $\varepsilon=10^{-3}$, and $I_{max}=200$. All UPAs initially have orientations set to ${\bf{r}}_n=[0^\circ,0^\circ,0^\circ]$. Each UPA is initialized at the center of its movable region.

The following baseline scenarios are considered to verify the effectiveness of the proposed scheme:
\begin{itemize}
\item {\bf{Fully-digital-FA, fully-digital-MA with flexible position, and fully-digital-MA with flexible orientation}}: The BS utilizes FAs/MAs with flexible position/MAs with flexible orientation with each antenna connected to a separate RF chain, which is capable of independently processing the input signal. By substituting the hybrid beamforming outlined in Section III-A and Section III-B with a closed-form solution of a fully-digital beamformer \cite{MA4}, the sum rate under the three architectures can be obtained by the proposed framework.

\item {\bf{Fully-connected-FA}}: The BS is equipped with FAs with a fully-connected structure \cite{HBF1,HBF2}. The ABF and DBF are derived by approximating the fully-digital beamformer obtained through the fully-digital-FA scheme.

\item {\bf{Sub-connected-FA}}: The BS utilizes FAs with a sub-connected structure \cite{HBF1,HBF2}. The sum rate of the scheme is obtained according to Section III without the position and orientation designs.

\item {\bf{Unpolarized \ac{6dma}}}: The proposed \acp{6dma} structure is employed, optimizing the DBF, ABF, and the positions and orientations of the UPAs based on a channel model without polarization. These parameters are then applied to the polarized channel model to calculate the sum rate.
\end{itemize}

Fig. \ref{Iteration} evaluates the convergence of proposed algorithm \ref{algorithm_2} with different power budget $P$. As observed, the sum rate obtained by algorithm \ref{algorithm_2} increases over iterations until reaching a constant, which validates our analysis in Section III-E. 

\begin{figure}[!t]
    \centering
    \includegraphics[width=0.4\textwidth]{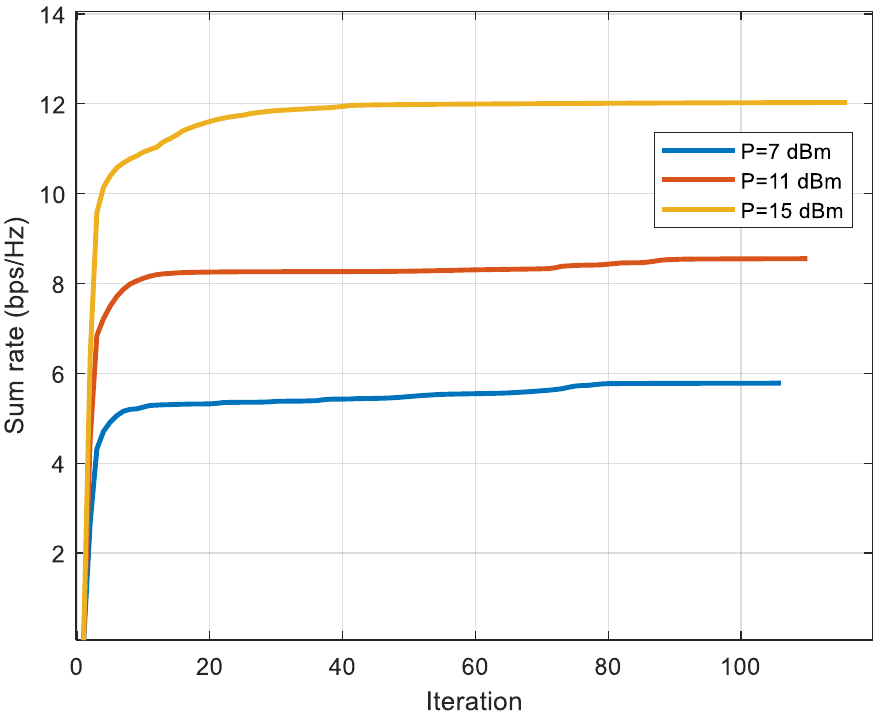}
    \caption{Convergence of Algorithm \ref{algorithm_2}.}
    \label{Iteration}
\end{figure}

\begin{figure}[!t]
    \centering
    \includegraphics[width=0.4\textwidth]{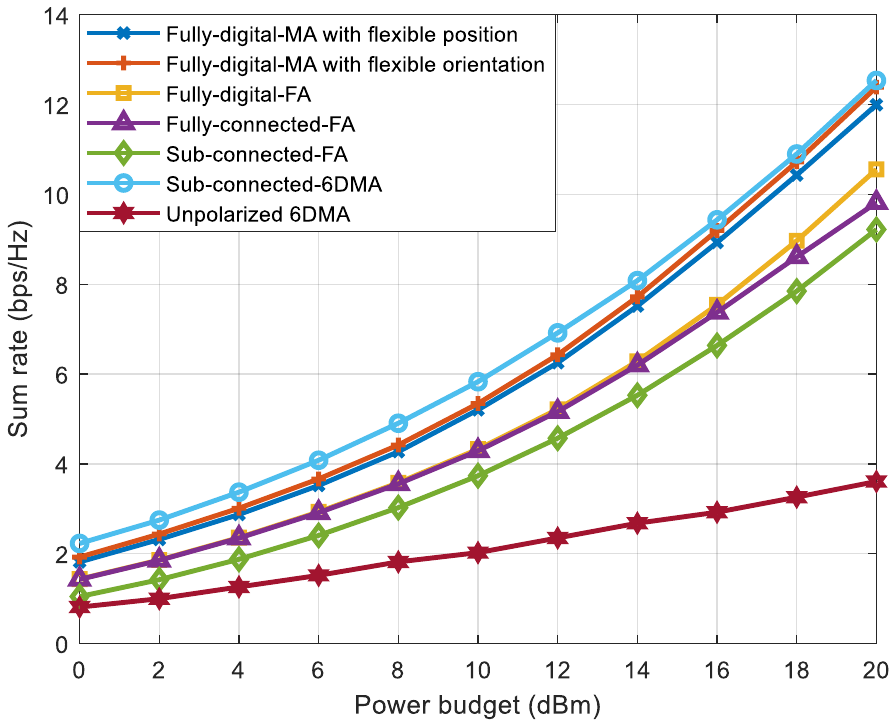}
    \caption{Sum rate versus the power budget.}
    \label{Power_hybrid}
\end{figure}

Fig. \ref{Power_hybrid} depicts the sum rate versus the power budget for different schemes. As expected, the sum rate among all considered schemes increases as the power budget increases. It is observed that for the same power budget $P = 20~ {\rm{dBm}}$, our proposed \ac{6dma} scheme achieves $135\%$, $127\%$, $118\%$, $4.05\%$, and $1.19\%$ performance improvement over the sub-connected-FA, fully-connected-FA, fully-digital-FA, fully-digital-MA with flexible position, and fully-digital-MA with flexible orientation schemes, respectively. This suggests that the performance gain of \acp{6dma} can compensate for the performance loss of using a sub-connected structure. Moreover, the proposed scheme can even outperform fully-digital-MA with flexible position/orientation at sufficiently low transmit power. This indicates that with lower inter-user interference, the \acp{6dma}' ability to enhance received signal power gain significantly contributes to sum rate enhancement. Notably, the performance gap between the unpolarized and proposed \ac{6dma} schemes is substantial, with the \ac{6dma} scheme showing an improvement of about $352\%$ at $P=20~ {\rm{dBm}}$. This highlights the significant performance loss due to polarization mismatch, which implies the importance of constructing models of EM waves' polarization directions \cite{PC_Heath2}. Fortunately, utilizing the \ac{6dma} enables more efficient polarization matching.

\begin{figure}[!t]
    \centering
    \includegraphics[width=0.4\textwidth]{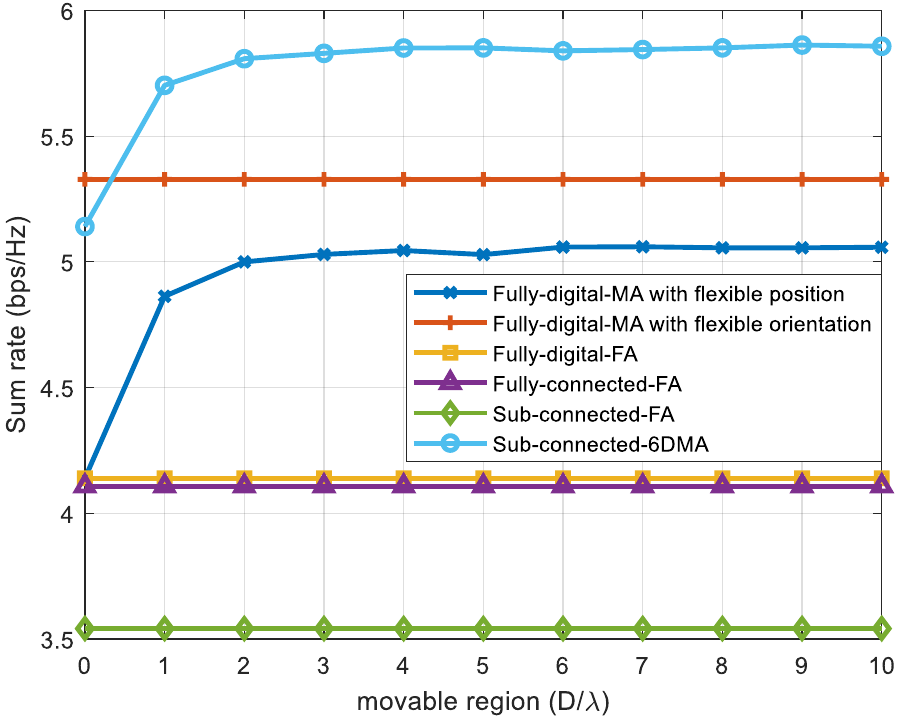}
    \caption{Sum rate versus the size of movable region.}
    \label{Movable_region}
\end{figure}

Fig. \ref{Movable_region} shows the sum rate versus the (normalized) size of movable region $D/\lambda$. As observed, the sum rate of the \ac{6dma} system increases with the increasing size of the movable region until reaching a constant. Moreover, with sufficient large movable region, the \ac{6dma} scheme outperforms all other schemes. The reason behind is that increasing region size provides more DoFs for the \acp{6dma} to exploit the channel spatial variation. However, due to the evident periodicity of the channel gain for each user in the spatial domain, which indicates a strong spatial correlation \cite{MA1}, the optimal channel condition within a finite movable region can be achieved for the sum-rate optimization of the \ac{6dma}-aided system. Therefore, further expansion of the movable region does not lead to additional spatial DoFs. Besides, the performance gap between the sub-connected-MA with flexible orientation (sub-connected-\ac{6dma} without movement) and the fully-digital-MA with flexible orientation is significantly smaller than that observed between the sub-connected-FA and the fully-digital-FA, from $16.8\%$ to $3.64\%$. This implies that, due to the directional beam and polarization effect, the array's rotational capability allows it to reconfigure the channel conditions to better align with the designed beamformer. This adaptability helps mitigate the performance gap between fully-digital and sub-connected structures.

\begin{figure}[!t]
    \centering
    \includegraphics[width=0.4\textwidth]{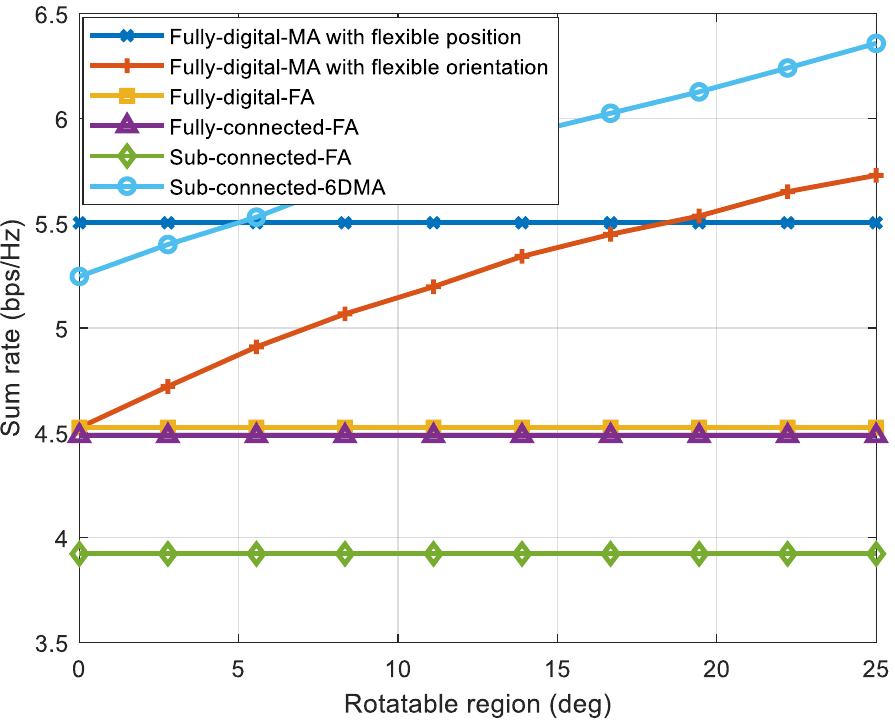}
    \caption{Sum rate versus the range of rotatable angle.}
    \label{Rotatable_region}
\end{figure}

Fig. \ref{Rotatable_region} demonstrates the sum rate versus the range of rotatable angles. The result reveals that the rotational capability of each UPA facilitates the adjustment of the radiation pattern direction and the polarization direction, as well as the specific positions of each antenna within this UPA. Consequently, as the range of the rotatable angles increases, more spatial DoFs can be explored to enhance the sum rate of \ac{6dma}/MA with flexible orientation system. Nevertheless, since the rotation constraints are simplified by a limited range of rotatable angles in simulations, the upper bound for sum rate cannot be achieved. Moreover, Fig. \ref{Rotatable_region} also shows that with sufficiently wide rotatable angles, the \ac{6dma} scheme achieves superior performance among all schemes. Moreover, the sub-connected-MA with flexible position scheme (sub-connected-\ac{6dma} without rotation) achieves a $169\%$ performance improvement over the fully-digital-FA scheme. This indicates that the antenna's movability alone can effectively compensate for the performance loss associated with adopting sub-connected structures.

\begin{figure}[!t]
    \centering
    \includegraphics[width=0.4\textwidth]{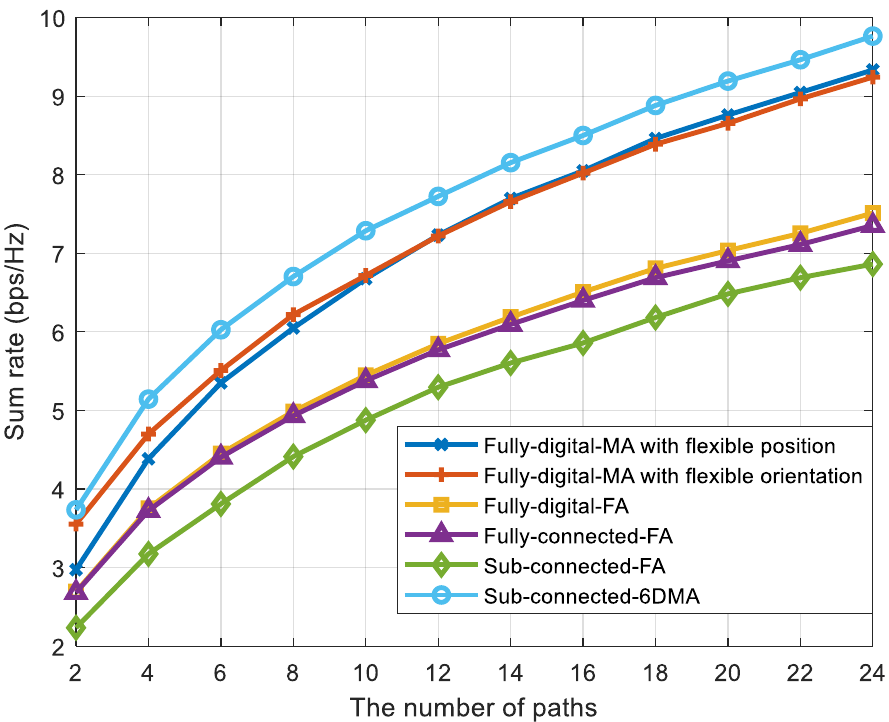}
    \caption{Sum rate versus the number of paths for each user.}
    \label{Path}
\end{figure}

\begin{figure}[!t]
    \centering
    \includegraphics[width=0.4\textwidth]{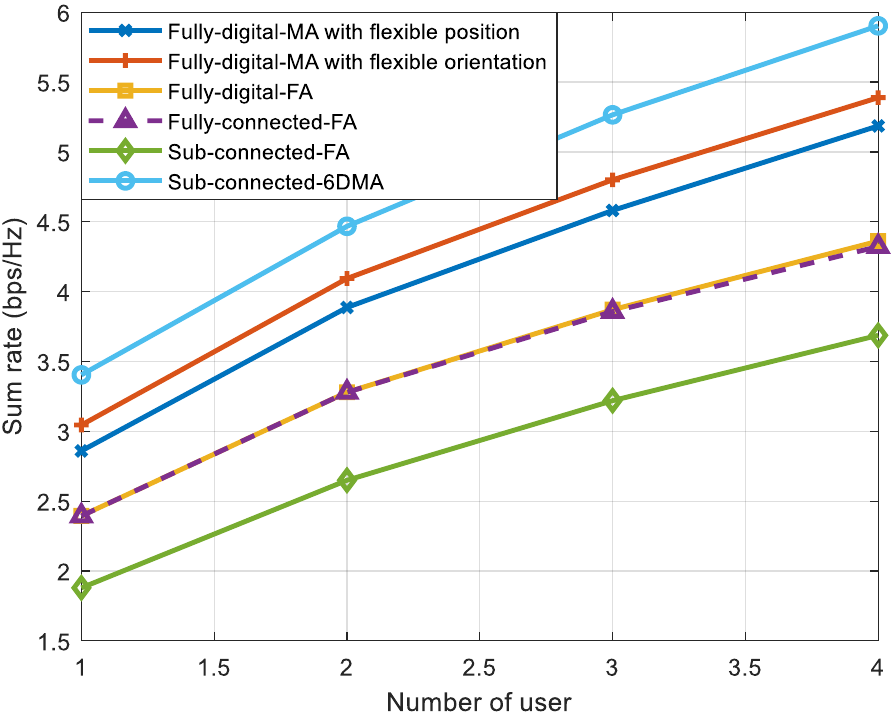}
    \caption{Sum rate versus the number of users.}
    \label{User}
\end{figure}

Fig. \ref{Path} evaluates the sum rate versus the number of paths for each user. We can observe again that the sum rate of the \ac{6dma} system is much higher than all other schemes. The reason for this is the channel gain of the \ac{6dma} system is essentially obtained from the small-scale fading of the multi-path channel. For a large number of paths with equal average power, the strong small-scale fading improves the spatial diversity in the transmit region, which provides more spatial DoFs for the \ac{6dma} system to exploit. Besides, the sum rates of all schemes increase with the number of paths, indicating that the ability to rotate, the mobility, and the beamformer design can all utilize the increased spatial DoFs to mitigate the multi-user interference and enhance the channel gain.

Fig. \ref{User} illustrates the sum rate versus the number of users. Note that the increase in sum rate is primarily attributed to the mitigation of multi-user interference and the improvement of the channel gain for each user. The proposed sub-connected-\ac{6dma} scheme consistently outperforms the five baselines for different numbers of users, which verifies that the \ac{6dma} system with its flexible positions and orientations is capable to create balanced channels for multi-user scenarios. This capability has the potential to reduce the correlation between channel gains, thereby enhancing channel gain for each user while mitigating interference among multiple users.
 
\section{Conclusion}
This paper investigated \ac{6dma}-aided multi-user hybrid beamforming with the sub-connected structure under a practical 6DMA channel model by incorporating directional antenna radiation patterns and antenna polarization. We studied the sum rate maximization problem by jointly designing the DBF, ABF, positions and orientations of UPAs. To solve this non-convex optimization problem, an FP-aided AO framework is developed to transform it into a more tractable form and then decomposed it into four subproblems. These subproblems can be sequentially solved by Lagrange multiplier method, MO and gradient descent. Simulations showed the superiority of the proposed sub-connected-\ac{6dma} scheme compared to the FA and MA based benchmark schemes with less flexibility. Most interestingly, under specific practical conditions, the sub-connected-\ac{6dma} scheme even outperforms the fully-digital schemes with less flexibility in antenna movement. {{The results also revealed that the baseline approach, which optimizes DBF, ABF, and UPA configurations based on an unpolarized channel model, incurs notable performance degradation. This thus necessitates the essential role of incorporating antenna polarization effects in optimizing antenna orientations for \ac{6dma}.}}

\section*{Appendix A\\Proof of Proposition \ref{expressions of channel gains}}
According to \eqref{theta_LCS}, \eqref{phi_LCS}, the definition $\tilde{\theta}_{k,l,n}\in[-\frac{\pi}{2},\frac{\pi}{2}]$ and $\tilde{\phi}_{k,l,n}\in[0,2\pi)$, and the fact that ${\bf{R}}_n^{-1}={\bf{R}}_n^T$, we have

\begin{subequations}\label{sin&cos}
    \begin{align}
        \sin\left(\tilde{\theta}_{k,l,n}\right) &= {\bf{e}}_3^T{\bf{R}}^T_n{\bm{\rho}}_{k,l},\label{sin_theta}\\
       \cos\left(\tilde{\theta}_{k,l,n}\right) &= \sqrt{{1-\left({\bf{e}}_3^T{\bf{R}}^T_n{\bm{\rho}}_{k,l}\right)^2}},\label{cos_theta}\\
       \sin\left(\tilde{\phi}_{k,l,n}\right) &= \frac{{\bf{e}}_2^T{\bf{R}}^T_n{\bm{\rho}}_{k,l}}{\sqrt{\left({\bf{e}}_1^T{\bf{R}}^T_n{\bm{\rho}}_{k,l}\right)^2+\left({\bf{e}}_2^T{\bf{R}}^T_n{\bm{\rho}}_{k,l}\right)^2}},\label{sin_phi}\\
       \cos\left(\tilde{\phi}_{k,l,n}\right) &= \frac{{\bf{e}}_1^T{\bf{R}}^T_n{\bm{\rho}}_{k,l}}{\sqrt{\left({\bf{e}}_1^T{\bf{R}}^T_n{\bm{\rho}}_{k,l}\right)^2+\left({\bf{e}}_2^T{\bf{R}}^T_n{\bm{\rho}}_{k,l}\right)^2}}.\label{cos_phi}
    \end{align}
\end{subequations}
By substituting \eqref{cos_theta} into \eqref{antenna aperture}, we have

\begin{equation}\label{e_solution}
\begin{aligned}
    A_E\left(\tilde{\theta}_{k,l,n}\right)&=\sqrt{\frac{3}{2}}\left|\cos\left(\tilde{\theta}_{k,l,n}\right)\right|\\
    &=\sqrt{\frac{3}{2}\left({1-\left({\bf{e}}_3^T{\bf{R}}^T_n{\bm{\rho}}_{k,l}\right)^2}\right)}.
\end{aligned}
\end{equation}

Next, combining \eqref{sin&cos}, \eqref{spherical unit vector}, and the fact that $({\bf{e}}_1^T{\bf{R}}^T_n{\bm{\rho}}_{k,l})^2+({\bf{e}}_2^T{\bf{R}}^T_n{\bm{\rho}}_{k,l})^2+({\bf{e}}_3^T{\bf{R}}^T_n{\bm{\rho}}_{k,l})^2=1$, we have

\begin{equation}
    {\bf{e}}_{\tilde{{\theta}}_{k,l,n}}=\begin{bmatrix}
        \frac{{\bf{e}}_1^T{\bf{R}}^T_n{\bm{\rho}}_{k,l}{\bf{e}}_3^T{\bf{R}}^T_n{\bm{\rho}}_{k,l}}{\sqrt{{1-\left({\bf{e}}_3^T{\bf{R}}^T_n{\bm{\rho}}_{k,l}\right)^2}}}\\\frac{{\bf{e}}_2^T{\bf{R}}^T_n{\bm{\rho}}_{k,l}{\bf{e}}_3^T{\bf{R}}^T_n{\bm{\rho}}_{k,l}}{\sqrt{{1-\left({\bf{e}}_3^T{\bf{R}}^T_n{\bm{\rho}}_{k,l}\right)^2}}}\\-\sqrt{{1-\left({\bf{e}}_3^T{\bf{R}}^T_n{\bm{\rho}}_{k,l}\right)^2}}
    \end{bmatrix}.
\end{equation}
Then, the polarization vector can be derived as

\begin{equation}\label{polarization_vector_t}
    \begin{aligned}
        {\bf{p}}_{k,l,n}^{t} &\stackrel{\left(a\right)} = \left({\bf{e}}_{{\theta}_{k,l}}{\bf{e}}_{{\theta}_{k,l}}^T+{\bf{e}}_{{\phi}_{k,l}}{\bf{e}}_{{\phi}_{k,l}}^T\right){\bf{R}}_n\left[{\bf{e}}_1,{\bf{e}}_2,{\bf{e}}_3\right]{\bf{e}}_{\tilde{{{\theta}}}_{k,l,n}}\\
        &\stackrel{\left(b\right)} = \left({\bf{I}}_3-{\bm{\rho}}_{k,l}{\bm{\rho}}_{k,l}^T\right)\frac{\left({\bm{\rho}}_{k,l}{\bm{\rho}}_{k,l}^T-{\bf{I}}_3\right){\bf{R}}_n{\bf{e}}_3}{\sqrt{{1-\left({\bf{e}}_3^T{\bf{R}}^T_n{\bm{\rho}}_{k,l}\right)^2}}}\\
        &=\frac{\left({\bm{\rho}}_{k,l}{\bm{\rho}}_{k,l}^T-{\bf{I}}_3\right){\bf{R}}_n{\bf{e}}_3}{\sqrt{{1-\left({\bf{e}}_3^T{\bf{R}}^T_n{\bm{\rho}}_{k,l}\right)^2}}},
    \end{aligned}
\end{equation}
where $(a)$ holds due to ${\bf{I}}_3\triangleq\left[{\bf{e}}_1,{\bf{e}}_2,{\bf{e}}_3\right]$ and $(b)$ holds because ${\bm{\rho}}_{k,l}$, ${\bf{e}}_{{\theta}_{k,l}}$, and ${\bf{e}}_{{\phi}_{k,l}}$ are a set of orthogonal basis vectors such that ${\bm{\rho}}_{k,l}{\bm{\rho}}_{k,l}^T+{\bf{e}}_{{\theta}_{k,l}}{\bf{e}}_{{\theta}_{k,l}}^T+{\bf{e}}_{{\phi}_{k,l}}{\bf{e}}_{{\phi}_{k,l}}^T={\bf{I}}_3$. By substituting \eqref{polarization_vector_t} into \eqref{polarization matching}, the channel gain from polarization matching can be derived as

\begin{equation}\label{p_solution}
    A_P\left(\tilde{\theta}_{k,l,n},\tilde{\phi}_{k,l,n}\right)=\frac{{\bf{e}}_3^T{\bf{R}}^T_n\left({\bm{\rho}}_{k,l}{\bm{\rho}}_{k,l}^T-{\bf{I}}_3\right){\bf{p}}_{k,l}^r}{\sqrt{{1-\left({\bf{e}}_3^T{\bf{R}}^T_n{\bm{\rho}}_{k,l}\right)^2}}}.
\end{equation}
The channel between the $m$-th antenna within the $n$-th UPA and the $k$-th user is calculated directly by substituting \eqref{e_solution} and \eqref{p_solution} into \eqref{general channel model}. This completes the proof of Proposition \ref{expressions of channel gains}.

\bibliographystyle{IEEEtran}
\bibliography{main.bib}

% Generated by IEEEtran.bst, version: 1.14 (2015/08/26)
\begin{thebibliography}{10}
\providecommand{\url}[1]{#1}
\csname url@samestyle\endcsname
\providecommand{\newblock}{\relax}
\providecommand{\bibinfo}[2]{#2}
\providecommand{\BIBentrySTDinterwordspacing}{\spaceskip=0pt\relax}
\providecommand{\BIBentryALTinterwordstretchfactor}{4}
\providecommand{\BIBentryALTinterwordspacing}{\spaceskip=\fontdimen2\font plus
\BIBentryALTinterwordstretchfactor\fontdimen3\font minus
  \fontdimen4\font\relax}
\providecommand{\BIBforeignlanguage}[2]{{%
\expandafter\ifx\csname l@#1\endcsname\relax
\typeout{** WARNING: IEEEtran.bst: No hyphenation pattern has been}%
\typeout{** loaded for the language `#1'. Using the pattern for}%
\typeout{** the default language instead.}%
\else
\language=\csname l@#1\endcsname
\fi
#2}}
\providecommand{\BIBdecl}{\relax}
\BIBdecl

\bibitem{HB1}
J.~Du, W.~Xu, H.~Shen, X.~Dong, and C.~Zhao, ``Hybrid precoding architecture
  for massive multiuser {MIMO} with dissipation: Sub-connected or fully
  connected structures?'' \emph{IEEE Trans. Wireless Commun.}, vol.~17, no.~8,
  pp. 5465--5479, Aug. 2018.

\bibitem{HBF1}
X.~Yu, J.-C. Shen, J.~Zhang, and K.~B. Letaief, ``Alternating minimization
  algorithms for hybrid precoding in millimeter wave {MIMO} systems,''
  \emph{IEEE J. Sel. Topics Signal Process.}, vol.~10, no.~3, pp. 485--500,
  Apr. 2016.

\bibitem{HBF2}
S.~S. Ioushua and Y.~C. Eldar, ``A family of hybrid analog–digital
  beamforming methods for massive {MIMO} systems,'' \emph{IEEE Trans. Signal
  Process.}, vol.~67, no.~12, pp. 3243--3257, Jun. 2019.

\bibitem{HBF3}
F.~Sohrabi and W.~Yu, ``Hybrid digital and analog beamforming design for
  large-scale antenna arrays,'' \emph{IEEE J. Sel. Topics Signal Process.},
  vol.~10, no.~3, pp. 501--513, Apr. 2016.

\bibitem{HBF4}
W.~Zhu, H.~D. Tuan, E.~Dutkiewicz, H.~V. Poor, and L.~Hanzo, ``Max-min rate
  optimization of low-complexity hybrid multi-user beamforming maintaining
  rate-fairness,'' \emph{IEEE Trans. Wireless Commun.}, vol.~23, no.~6, pp.
  5648--5662, Jun. 2024.

\bibitem{HBF5}
H.~Zhang, N.~Shlezinger, F.~Guidi, D.~Dardari, M.~F. Imani, and Y.~C. Eldar,
  ``Beam focusing for near-field multiuser {MIMO} communications,'' \emph{IEEE
  Trans. Wireless Commun.}, vol.~21, no.~9, pp. 7476--7490, Sep. 2022.

\bibitem{HBF6}
O.~E. Ayach, S.~Rajagopal, S.~Abu-Surra, Z.~Pi, and R.~W. Heath, ``Spatially
  sparse precoding in millimeter wave {MIMO} systems,'' \emph{IEEE Trans.
  Wireless Commun.}, vol.~13, no.~3, pp. 1499--1513, Mar. 2014.

\bibitem{Regular_array1}
X.~Song and G.~Fettweis, ``On spatial multiplexing of strong line-of-sight
  {MIMO} with {3D} antenna arrangements,'' \emph{IEEE Wireless Commun. Lett.},
  vol.~4, no.~4, pp. 393--396, Aug. 2015.

\bibitem{Irregular_array1}
N.~Matsumura, K.~Nishimori, R.~Taniguchi, T.~Hiraguri, T.~Tomura, and
  J.~Hirokawa, ``Novel unmanned aerial vehicle-based line-of-sight {MIMO}
  configuration independent of transmitted distance using millimeter wave,''
  \emph{IEEE Access}, vol.~8, pp. 11\,679--11\,691, 2020.

\bibitem{Rotate_array}
H.~Do, N.~Lee, and A.~Lozano, ``Reconfigurable {ULA}s for line-of-sight {MIMO}
  transmission,'' \emph{IEEE Trans. Wireless Commun.}, vol.~20, no.~5, pp.
  2933--2947, May 2021.

\bibitem{MA_history}
L.~Zhu and K.-K. Wong, ``Historical review of fluid antenna and movable
  antenna,'' \emph{arXiv preprint arXiv:2401.02362}, 2024.

\bibitem{FAS1}
K.-K. Wong, W.~K. New, X.~Hao, K.-F. Tong, and C.-B. Chae, ``Fluid antenna
  system-part {I}: Preliminaries,'' \emph{IEEE Commun. Lett.}, vol.~27, no.~8,
  pp. 1919--1923, Aug. 2023.

\bibitem{MA_overview}
L.~Zhu, W.~Ma, and R.~Zhang, ``Movable antennas for wireless communication:
  Opportunities and challenges,'' \emph{IEEE Commun. Mag.}, vol.~62, no.~6, pp.
  114--120, Jun. 2024.

\bibitem{MA_implement1}
J.~Zheng, J.~Zhang, H.~Du, D.~Niyato, S.~Sun, B.~Ai, and K.~B. Letaief,
  ``Flexible-position {MIMO} for wireless communications: Fundamentals,
  challenges, and future directions,'' \emph{IEEE Wireless Commun.}, vol.~31,
  no.~5, pp. 18--26, Oct. 2024.

\bibitem{MA_implement2}
S.~Yang, J.~An, Y.~Xiu, W.~Lyu, B.~Ning, Z.~Zhang, M.~Debbah, and C.~Yuen,
  ``Flexible antenna arrays for wireless communications: Modeling and
  performance evaluation,'' \emph{arXiv preprint arXiv:2407.04944}, 2024.

\bibitem{MA_implement3}
B.~Ning, S.~Yang, Y.~Wu, P.~Wang, W.~Mei, C.~Yuen, and E.~Björnson, ``Movable
  antenna-enhanced wireless communications: General architectures and
  implementation methods,'' \emph{arXiv preprint arXiv:2407.15448}, 2024.

\bibitem{MA1}
L.~Zhu, W.~Ma, and R.~Zhang, ``Modeling and performance analysis for movable
  antenna enabled wireless communications,'' \emph{IEEE Trans. Wireless
  Commun.}, vol.~23, no.~6, pp. 6234--6250, Jun. 2024.

\bibitem{MA2}
W.~Ma, L.~Zhu, and R.~Zhang, ``{MIMO} capacity characterization for movable
  antenna systems,'' \emph{IEEE Trans. Wireless Commun.}, vol.~23, no.~4, pp.
  3392--3407, Apr. 2024.

\bibitem{MA3}
L.~Zhu, W.~Ma, B.~Ning, and R.~Zhang, ``Movable-antenna enhanced multiuser
  communication via antenna position optimization,'' \emph{IEEE Trans. Wireless
  Commun.}, vol.~23, no.~7, pp. 7214--7229, Jul. 2023.

\bibitem{MA_self}
Y.~Zhang, Y.~Zhang, L.~Zhu, S.~Xiao, W.~Tang, Y.~C. Eldar, and R.~Zhang,
  ``Movable antenna-aided hybrid beamforming for multi-user communications,''
  \emph{arXiv preprint arXiv:2404.00953}, 2024.

\bibitem{MA4}
B.~Feng, Y.~Wu, X.-G. Xia, and C.~Xiao, ``Weighted sum-rate maximization for
  movable antenna-enhanced wireless networks,'' \emph{IEEE Wireless Commun.
  Lett.}, vol.~13, no.~6, pp. 1770--1774, Jun. 2024.

\bibitem{MA_Flexible_Precoding}
S.~Yang, W.~Lyu, B.~Ning, Z.~Zhang, and C.~Yuen, ``Flexible precoding for
  multi-user movable antenna communications,'' \emph{IEEE Wireless Commun.
  Lett.}, vol.~13, no.~5, pp. 1404--1408, May 2024.

\bibitem{MA_Two_Time}
G.~Hu, Q.~Wu, G.~Li, D.~Xu, K.~Xu, J.~Si, Y.~Cai, and N.~Al-Dhahir,
  ``Two-timescale design for movable antenna array-enabled multiuser uplink
  communications,'' \emph{IEEE Trans. Veh. Technol.}, early access, 2024.

\bibitem{MA_Flexible_BF}
L.~Zhu, W.~Ma, and R.~Zhang, ``Movable-antenna array enhanced beamforming:
  Achieving full array gain with null steering,'' \emph{IEEE Commun. Lett.},
  vol.~27, no.~12, pp. 3340--3344, Dec. 2023.

\bibitem{MA_satellite}
L.~Zhu, X.~Pi, W.~Ma, Z.~Xiao, and R.~Zhang, ``Dynamic beam coverage for
  satellite communications aided by movable-antenna array,'' \emph{arXiv
  preprint arXiv:2404.15643}, 2024.

\bibitem{MA_UAV}
X.-W. Tang, Y.~Shi, Y.~Huang, and Q.~Wu, ``{UAV}-mounted movable antenna: Joint
  optimization of {UAV} placement and antenna configuration,'' \emph{arXiv
  preprint arXiv:2409.02469}, 2024.

\bibitem{MA_spectrum}
E.~N. Tominaga, O.~L.~A. López, T.~Svensson, R.~D. Souza, and H.~Alves, ``On
  the spectral efficiency of movable and rotary antenna arrays under rician
  fading,'' \emph{arXiv preprint arXiv:2408.08112}, 2024.

\bibitem{MA_over_the_air}
Y.~Zhao, Y.~Xiu, M.~Xu, P.~Wang, and N.~Wei, ``Movable antenna-aided federated
  learning with over-the-air aggregation: Joint optimization of positioning,
  beamforming, and user selection,'' \emph{arXiv preprint arXiv:2411.06721},
  2024.

\bibitem{MA_PLS}
W.~Mei, X.~Wei, Y.~Liu, B.~Ning, and Z.~Chen, ``Movable-antenna position
  optimization for physical-layer security via discrete sampling,'' \emph{arXiv
  preprint arXiv:2408.08322}, 2024.

\bibitem{6dma1}
X.~Shao, Q.~Jiang, and R.~Zhang, ``{6D} movable antenna based on user
  distribution: Modeling and optimization,'' \emph{IEEE Trans. Wireless
  Commun.}, early access, 2024.

\bibitem{6dma2}
X.~Shao, R.~Zhang, Q.~Jiang, and R.~Schober, ``{6D} movable antenna enhanced
  wireless network via discrete position and rotation optimization,''
  \emph{IEEE J. Sel. Areas Commun.}, early access, 2024.

\bibitem{6dma3}
X.~Shao and R.~Zhang, ``{6DMA} enhanced wireless network with flexible antenna
  position and rotation: Opportunities and challenges,'' \emph{IEEE Commun.
  Mag.}, early access, 2024.

\bibitem{6dma4}
X.~Shao, R.~Zhang, Q.~Jiang, J.~Park, T.~Q.~S. Quek, and R.~Schober,
  ``Distributed channel estimation for {6D} movable antenna: Unveiling
  directional sparsity,'' \emph{arXiv preprint arXiv:2409.16510}, 2024.

\bibitem{6dma5}
X.~Shi, X.~Shao, and R.~Zhang, ``Capacity maximization for base station with
  hybrid fixed and movable antennas,'' \emph{IEEE Wireless Commun. Lett.},
  vol.~13, no.~10, pp. 2877--2881, Oct. 2024.

\bibitem{6dma_UAV}
T.~Ren, X.~Zhang, L.~Zhu, W.~Ma, X.~Gao, and R.~Zhang, ``{6D} movable antenna
  enhanced interference mitigation for cellular-connected {UAV}
  communications,'' \emph{arXiv preprint arXiv:2409.13278}, 2024.

\bibitem{6dma_polar}
R.~Zhang, Y.~Shao, and Y.~C. Eldar, ``Polarization aware movable antenna,''
  \emph{arXiv preprint arXiv:2411.06690}, 2024.

\bibitem{3GPP}
3GPP, ``Study on channel model for frequencies from 0.5 to 100 {G}hz,'' 3GPP,
  Tech. Rep. TR 138 901 V18.0.0 (2024-05), 2024.

\bibitem{PC_Liu}
Y.~Liu, Z.~Wang, J.~Xu, C.~Ouyang, X.~Mu, and R.~Schober, ``Near-field
  communications: A tutorial review,'' \emph{IEEE Open J. Commun. Soc.},
  vol.~4, pp. 1999--2049, 2023.

\bibitem{PC_Heath1}
M.~R. Castellanos and R.~W. Heath, ``Linear polarization optimization for
  wideband {MIMO} systems with reconfigurable arrays,'' \emph{IEEE Trans.
  Wireless Commun.}, vol.~23, no.~3, pp. 2282--2295, Mar. 2024.

\bibitem{PC_Heath2}
R.~Bhagavatula, C.~Oestges, and R.~W. Heath, ``A new double-directional channel
  model including antenna patterns, array orientation, and depolarization,''
  \emph{IEEE Trans. Veh. Technol.}, vol.~59, no.~5, pp. 2219--2231, Jun. 2010.

\bibitem{AP1}
H.~Lu and Y.~Zeng, ``Communicating with extremely large-scale array/surface:
  Unified modeling and performance analysis,'' \emph{IEEE Trans. Wireless
  Commun.}, vol.~21, no.~6, pp. 4039--4053, Jun. 2022.

\bibitem{cosine_pattern1}
C.~A. Balanis, \emph{Antenna theory: analysis and design}.\hskip 1em plus 0.5em
  minus 0.4em\relax John wiley \& sons, 2016.

\bibitem{cosine_pattern2}
V.~Jamali, A.~M. Tulino, G.~Fischer, R.~R. Müller, and R.~Schober,
  ``Intelligent surface-aided transmitter architectures for millimeter-wave
  ultra massive {MIMO} systems,'' \emph{IEEE Open J. Commun. Soc.}, vol.~2, pp.
  144--167, 2021.

\bibitem{Polar_Jsac}
M.~Shafi, M.~Zhang, A.~Moustakas, P.~Smith, A.~Molisch, F.~Tufvesson, and
  S.~Simon, ``Polarized {MIMO} channels in 3-{D}: models, measurements and
  mutual information,'' \emph{IEEE J. Sel. Areas Commun.}, vol.~24, no.~3, pp.
  514--527, Mar. 2006.

\bibitem{FP}
K.~Shen and W.~Yu, ``Fractional programming for communication systems-part
  {II}: Uplink scheduling via matching,'' \emph{IEEE Trans. Signal Process.},
  vol.~66, no.~10, pp. 2631--2644, May 2018.

\bibitem{RIS}
S.~Asaad, Y.~Wu, A.~Bereyhi, R.~R. Müller, R.~F. Schaefer, and H.~V. Poor,
  ``Secure active and passive beamforming in {IRS}-aided {MIMO} systems,''
  \emph{IEEE Trans. Inf. Forensics Security}, vol.~17, pp. 1300--1315, 2022.

\bibitem{Manifold}
P.-A. Absil, R.~Mahony, and R.~Sepulchre, \emph{Optimization algorithms on
  matrix manifolds}.\hskip 1em plus 0.5em minus 0.4em\relax Princeton
  University Press, 2008.

\bibitem{Manifold2}
D.~P. Bertsekas, ``Nonlinear programming,'' \emph{J. Oper. Res. Soc.}, vol.~48,
  p. 334, 1997.

\end{thebibliography}
%\addbibresource{main.bib}
\end{document}